\definecolor{codegreen}{rgb}{0,0.6,0}
\definecolor{codegray}{rgb}{0.5,0.5,0.5}
\definecolor{codepurple}{rgb}{0.58,0,0.82}
\definecolor{backcolour}{rgb}{0.95,0.95,0.92}
\newcommand{\sskip}{\vspace{5pt}}
\lstdefinestyle{mystyle}{
    backgroundcolor=\color{backcolour},   
    commentstyle=\color{codegreen},
    keywordstyle=\color{magenta},
    stringstyle=\color{codepurple},
    basicstyle=\ttfamily\footnotesize,
    breakatwhitespace=false,         
    breaklines=true,                 
    captionpos=b,                    
    keepspaces=true,                 
    numbers=left,                    
    numbersep=5pt,                  
    showspaces=false,                
    showstringspaces=false,
    showtabs=false,                  
    tabsize=2
}
\begin{document}

\title{Optimizing Queries with Many-to-Many Joins}

\author{Hasara Kalumin}
\affiliation{%
  \institution{University of Maryland}
  \city{College Park}
  \country{USA}}
\email{hdesilva@umd.edu}

\author{Amol Deshpande}
\affiliation{%
  \institution{University of Maryland}
  \city{College Park}
  \country{USA}
}
\email{amol@umd.edu}


\begin{abstract}
\noindent
As database query processing techniques are being used to handle diverse workloads, a key emerging challenge is how to efficiently handle multi-way join queries containing multiple many-to-many joins. While uncommon in
traditional enterprise settings that have been the focus of much of the query optimization work to date, such queries are seen frequently in other contexts such as graph workloads. This has led to much work on developing join
algorithms for handling cyclic queries, on compressed (factorized) representations for more efficient storage of intermediate results, and on use of semi-joins or
predicate transfer to avoid generating large redundant intermediate results. In this
paper, we address a core query optimization problem in this context.  Specifically, 
we introduce an improved cost model that more accurately captures the cost of a query plan in such scenarios,
and we present several optimization algorithms for query optimization that incorporate these new cost functions. 
We present an extensive experimental evaluation, that compares the factorized representation approach with a full semi-join reduction approach as well as to an approach that uses
bitvectors to eliminate tuples early through sideways information passing.
We also present new analyses of robustness of these techniques to the choice of the join order, 
potentially eliminating the need for more complex query optimization and selectivity estimation techniques.

\end{abstract}

\begin{CCSXML}
<ccs2012>
 <concept>
  <concept_id>00000000.0000000.0000000</concept_id>
  <concept_desc>Do Not Use This Code, Generate the Correct Terms for Your Paper</concept_desc>
  <concept_significance>500</concept_significance>
 </concept>
 <concept>
  <concept_id>00000000.00000000.00000000</concept_id>
  <concept_desc>Do Not Use This Code, Generate the Correct Terms for Your Paper</concept_desc>
  <concept_significance>300</concept_significance>
 </concept>
 <concept>
  <concept_id>00000000.00000000.00000000</concept_id>
  <concept_desc>Do Not Use This Code, Generate the Correct Terms for Your Paper</concept_desc>
  <concept_significance>100</concept_significance>
 </concept>
 <concept>
  <concept_id>00000000.00000000.00000000</concept_id>
  <concept_desc>Do Not Use This Code, Generate the Correct Terms for Your Paper</concept_desc>
  <concept_significance>100</concept_significance>
 </concept>
</ccs2012>
\end{CCSXML}


\keywords{many-to-many joins, factorization, semi-join reduction, robust query processing}


\maketitle

\section{Introduction}

The increasing use of database query processing techniques in handling diverse workloads, such as graph workloads and machine learning workloads, has
led to significant research into developing new methods for query processing and optimization. A particular area of focus within this research is
the prevalence of many-to-many joins in these workloads. Such joins frequently generate large intermediate results, even when the final query result
is relatively small. This phenomenon has prompted extensive studies in recent years, including a large body of work on worst-case optimal join
algorithms~\cite{hung, ngo-wcoj, skew-wcoj, Veldhuizen2012LeapfrogTA, Freitag2020CombiningWO, AdoptingWcojs, remywang}, factorized or compressed intermediate representations~\cite{danolt, FDB, olt, AbulBasher2020AnswerGF, graphflowdb}, 
and semi-join or bloomfilter-based approaches to eliminate tuples
early~\cite{dynamicyannakakis,Tziavelis2019OptimalAF,TziavelisGR21,diamond,lip,bitvector_aware_query_proc,predicate_transfer}.

Although significant progress has been made in developing query processing techniques to minimize the generation of spurious intermediate results and
to generate appropriate intermediate representations, there has been comparatively less focus on the {\em query optimization} problem, i.e., finding
the optimal query execution plan given a query and relevant statistics (e.g., selectivity estimates). 
As the execution techniques themselves mature, this question is becoming more urgent as a pre-cursor to incorporating the techniques standard query processing engines.

In this paper, we address this problem in a systematic and end-to-end fashion by: (a) developing a new cost model that accounts for the effect of postponing generation of intermediate results and avoiding redundant probes into hash tables, and (b) investigating the query
optimization problem of finding the optimal {\em left-deep pipelined} query execution plan for an {\bf acyclic query} with many-to-many joins, for the three approaches.
Although the latter optimization problem is known to admit a simple solution called {\em rank ordering} with a standard cost function~\cite{kbz,eddies_joe,a-greedy, lip, learned-QO-jignesh}, 
as we show in this paper, this problem becomes much more intricate when the intermediate result generation is postponed.  
We present an approach to model the cost of
an query execution plan under standard uniformity and independence assumptions that are typically made in this literature. The optimization problem,
unfortunately, does not seem to admit a polynomial-time optimal solution in all cases, because of the inherent complex interactions between the join conditions. For 
the factorization and bitvector-based approaches, we develop an exponential time optimal algorithm as well as several natural heuristics, whereas for the semi-join
full-reduction approach, we develop a polynomial-time optimal algorithm. 
We note that the cost models we develop, and the
analysis of the optimization problem, are more broadly applicable (e.g., for optimizing queries with web services~\cite{Srivastava2006QueryOO} or expensive
predicates~\cite{joe-expensive-predicates}). 

\begin{figure}
    \centering
    \includegraphics[scale=0.55]{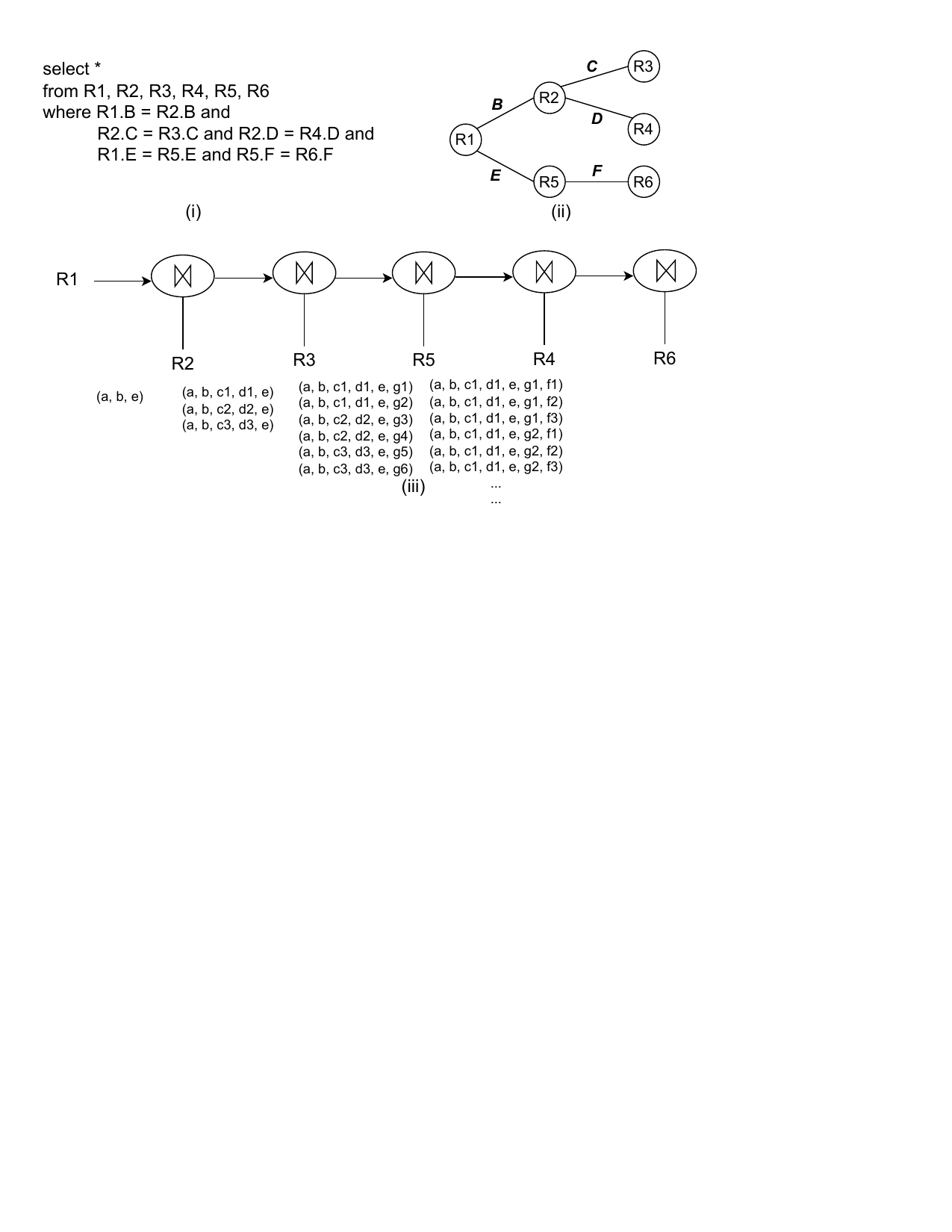}
    \caption{(i) An example 6-relation query used as the running example, and (ii) its join graph with edges annotated with the join attributes; (iii) A left-deep query plan and partial execution of a single $R1$ tuple.}
    \label{fig:running_example}
\end{figure}

We illustrate the key concepts and the motivation through an example 6-relation query shown in Figure \ref{fig:running_example}. One of the query
plans (that uses $R1$ as the ``driver'' relation) is shown as well as the sequence of tuples generated for a single $R1$ tuple $(a, b, e)$. Consider
the join with $R5$. In a typical implementation of such a left-deep pipelined plan, each of the 6 tuples shown in the example execution will be joined
with $R5$ (e.g., if this is a hash join, then each of the 6 tuples will be ``probed'' into the $R5$ hash table). However, we notice that all of these tuples must share the same value for the join attribute $E$;
this is because they were all generated from the same $R1$ tuple and the join with $R5$ is on an attribute of $R1$. So we can get away with just one probe into the hash table. We can similarly see the same issue with the join with $R4$. Here the first group of tuples that is shown (generated from $(a, b, c_1, d_1, e)$) have $D = d_1$, so only one probe into $R4$ is needed for these tuples. Although this is a somewhat exaggerated example, similar redundant probes can be seen in typical join workloads as we will show later. This inefficiency, sometimes called ``caching effect'', has been observed in previous literature~\cite{joe-expensive-predicates,aqp-paper,a-greedy}, but was not systematically explored until recently. 

A natural way to avoid such redundant probes is to maintain the intermediate tuples in a compressed or factorized representation (i.e., postpone some of the joins but keep enough information around to perform them later), so that the redundant probes are naturally avoided. This is the
approach we adopt, and has been considered in several recent works~\cite{danolt, graphflowdb, AbulBasher2020AnswerGF, diamond}. Another approach to
have a small cache in front of the join operator that is checked first~\cite{joe-expensive-predicates, stems}.  This is most useful when the probes
are expensive (e.g., if a probe involves a call to a web service or an API, or involves execution of an expensive UDF); however, in the case we
consider here, building such a cache dynamically adds significant overheads without any benefits (since a cache lookup is almost as 
expensive as the hash table lookup). An alternative approach is to use semi-joins or bitvector-based pruning to reduce the generation of spurious
intermediate results; this doesn't eliminate the redundant probes, but reduces the overall impact of the redundancy. For instance, assuming the tuple
$(a, b, e)$ contributes to the final result, it cannot be pruned away and will result in redundant probes shown in the example. The query optimization
problem needs to re-studied irrespective of the specific technique being used,
and our formalization can be applied to most such approaches.

A critical implication of accounting for this effect is that, once this is handled,
query processing becomes more {\bf robust} to the choice of the
join order; it still matters, but is less sensitive to estimation errors. The difference is especially stark for {\em star} queries where all the joins are with a single relation, and the
many-to-many joins almost don't matter from join order optimization perspective~\cite{bitvector_aware_query_proc,lip}. We highlight this because we have seen a number of recent works continue to use the
naive execution approach in presence of many-to-many joins. In that sense, our conclusions are similar to the recent work that showed that simple adaptations to the
query processing engines can reduce the need for more complex query optimization approaches~\cite{learned-QO-jignesh}.

We present a comprehensive experimental evaluation in a prototype system where we carefully implemented all of these techniques in a vectorized fashion, and illustrate 
the benefits of these techniques to drastically reduce, in presence of many-to-many joins, the 
overall execution cost as well as the number of probes into the hash tables, a more abstract cost metric. 

We begin with discussing the relevant background including the prior work on optimizing left-deep pipelined query plans and discuss the Yanakakis algorithm for 
minimizing the number of intermediate results created for an acyclic query. In Section~\ref{costmodel}, we present our cost model that properly accounts for the
postponement of generation of all intermediate results and discuss how that impacts the overall cost execution; we also present optimal algorithms as well 
as several fast heuristics to find best join orders. In Section~\ref{vectorized execution}, we describe our vectorized query execution engine. 
In Section \ref{evaluation}, we present our experimental evaluation, and put our work in context of other related work in Section \ref{related work}.

\section{Background}
\label{sec:background}

In this section, we discuss the two most closely related background topics namely, optimization of left-deep query plans and the Yannakakis algorithm
to minimize the number of intermediate results for acyclic queries. We discuss the other closely related work in Section \ref{related work}.

\subsection{Optimizing Left-Deep Pipelined Query Plans}
Let $Q$ denote a multi-way join query over relations $R_1, R_2, \cdots, R_n$ for which we are tasked to find an optimal execution plan. We assume $Q$ is
acyclic\footnote{The techniques we develop can be applied to cyclic queries in a standard fashion by choosing a spanning tree of the join
    graph~\cite{kbz}; however, the optimality results or the formal cost model don't directly generalize.}. 
    We also assume any selections are pushed down to the relations, and don't consider them explicitly any further (except in Section~\ref{selectivity estimation} 
            where we discuss selectivity estimation issues).
    We restrict our attention in this paper to ``left-deep'' (``left-linear'') query
execution plans, where the right input to each join must be one of the base relations. Figure \ref{fig:running_example} shows an example left-deep query plan, where $R_1$ (the
left input to the first join) is called the ``driver'' relation whose tuples are joined with each of the other relations one-by-one in some order.
Left-deep plans are attractive due to their pipelined nature, and are especially suitable in the context of {\em data streams} where the optimization issues
have been investigated in depth~\cite{eddies_joe,a-greedy,a-caching,flowalgorithms,skinnerdb}\footnote{We note that this class of plans is sometimes called
``right-deep''.}. In earlier work, such plans were executed ``tuple-at-a-time'', i.e., by taking a single tuple from 
the driver relation and propagating
it through all the join operators in the specified order, until results are produced or the tuple is thrown away due to not finding a match. In recent work
(including this work), the plans are executed ``batch-at-a-time'' to benefit from vectorization and to achieve better cache locality.

For the formal development, we assume a {\em generalized join} operator, which takes in a key (i.e., the value of the join attribute), and returns a collection
of ``matches'' from the corresponding relation. Although we primarily focus on hash joins, the formalism naturally captures index-nested-loops joins as well, with the cost
    of the probe (index lookup) modeled appropriately. It also allows us to generalize
to situations where external function calls may be made to find matches (e.g., to Web Services~\cite{Srivastava2006QueryOO} or LLMs or other types of API services), 
or expensive user-defined functions that return one or more outputs (e.g., using a {\em flatMap}-style operation or a {\em lateral join})~\cite{joe-expensive-predicates}. 
We call a single invocation of the join operator for a tuple a {\em probe}. In case of hash joins, a probe is equivalent to a hash table lookup, whereas for external API
calls, a probe is equivalent to one such call (in such scenarios, minimization of the number of probes becomes the key optimization metric given the monetary costs
associated with such external calls).

The join order problem here can, thus, be reduced to choosing the driver relation, and a permutation of the remaining relations. For simplicity, we assume
$R_1$ is always the driver relation in our examples as well as in the algorithms, and focus on finding the permutation of the remaining relations~\cite{kbz}. The
optimization algorithms can be ran once for each choice of the driver relation to find the overall optimal plan.

The {\em cost} of a given query plan $R_{i_1}, \cdots, R_{i_{n-1}}$ is estimated as: 
$$|R_1| (c_{i_1} + s_{i_1} * c_{i_2} + ...)$$
where $c_{i_1}$ denotes the cost of a single probe operation, and $s_{i_1}$ denotes the {\em selectivity} of the join operation (i.e., the expected number of matches per input tuple). Note that $s_i$ depends on the driver relation chosen, since that dictates the direction of the probe.
We assume the costs ($c_i$) are known in advance; in practice, the costs are likely to be different for different tuples, but we are not aware of any work in query
optimization that has modeled varying costs for a single join operator. 
Furthermore, the cost formula above makes an implicit assumption that the join operators are ``independent''. 

Given this cost model, the classical rank ordering algorithm~\cite{kbz,non-recursive-queries} sorts the join operators in the increasing order, while obeying precedence constraints, by:
$$(s_i - 1)/c_i$$ 
The precedence constraints are required to avoid cartesian products. For example, for the query shown in Figure \ref{fig:running_example}, only $\bowtie R_2$ and
$\bowtie R_5$ are eligible to be the first join operator (given $R_1$ is the driver relation), and if $R_2$ is chosen first, then $R_3$ and $R_4$ are also options for
the next join.
This simple algorithm is known to be optimal for the above cost model (which is very similar to what modern query optimizers use for left-deep pipelined plans). 
If correlations are present and can be modeled, the problem is known to be NP-Hard and admits constant-factor approximation algorithms~\cite{a-greedy}.

Although simple, this algorithm is very efficient and easy to analyze and has been widely used in the prior work in the data streaming and adaptive query
processing as well as recent work on robust query processing~\cite{skinnerdb,learned-QO-jignesh}.

However, a major problem with this cost model is the possibility of ``redundant'' probes in presence of many-to-many joins, as we discussed in the previous section. 
This observation has been made in several prior works. Earlier work on Eddies (specifically, SteMs~\cite{stems}) proposed using ``caching'' to avoid expensive probes, and the work on pipelined filter ordering~\cite{a-greedy}
proposed probing into join operators separately and doing a cross product at the end (they only discuss this for {\em star} queries). The recent line of work on {\em factorization}~\cite{danolt,FDB,olt} has investigated this and related problems more formally and
systematically. However, we are unaware of prior work on general cost models that work for arbitrary queries, or join order optimization when such techniques are used.

\subsection{Yannakakis Algorithm and Robust Query Processing}
For acyclic queries, the Yannakakis algorithm~\cite{Yannakakis1981AlgorithmsFA,dynamicyannakakis} guarantees an optimal total query execution time of $O(IN + OUT)$, where $IN$ denotes the total input size (across all relations)
and $OUT$ denotes the output size. The algorithm achieves this by ensuring that every intermediate tuple that is generated contributes to at least one output tuple.
This is done through two semi-join passes, whereby all tuples in the input relations that do not contribute to any output tuple are removed before doing the
actual joins. Although it is optimal, the two semi-join passes (and the additional pass at the end to generate results) makes this an expensive algorithm to use in practice 
(requiring a larger number of hash tables if hash joins are used,  including on the larger input relations which are often used as probe relations)\footnote{Note that, we use a
more efficient variation of this approach that requires fewer hash tables and is more commonly used in practice.}. Equally
importantly, this is a fundamentally ``blocking'' technique since no
output tuples can be produced until the two semi-join passes are completed. Figure \ref{fig:full_reduction} illustrates this approach for our running example, where we use the driver relation as the ``root''. 

\begin{figure}
    \centering
    \includegraphics[scale=0.6]{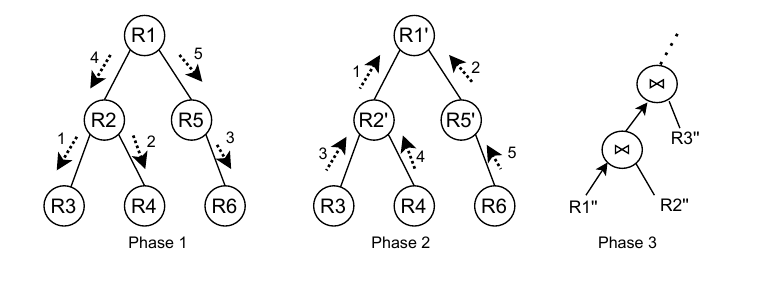}
    \caption{Yannakakis algorithm uses two semi-join passes to fully {\em reduce} the relations; arrows indicate the probe
        direction (e.g., Phase 1 first operation 1 is $R_2 \ltimes R_3$ to reduce $R_2$)}
    \label{fig:full_reduction}
\end{figure}

Practical adaptations of the above technique, typically through use of ``bloom filters'' or bitvectors, have been implemented in a number of systems over the years~\cite{lip, bitvector_aware_query_proc,predicate_transfer}. 
 
Figure \ref{fig:bitvector} shows an example where appropriate bitvectors are constructed (denoted by \textbf{BV}) and pushed down as far as possible.
Unlike what's shown in our running example (Figure~\ref{fig:running_example}), consider the scenario where $e$ finds no match in
$R_5$. The Yannakakis algorithm would have eliminated the tuple $(a, b, e)$ during its backward semi-join phase ($R_5' \ltimes R_1'$). 
In the bitvector-based plan shown in Figure~\ref{fig:bitvector}, $e$ will be checked against $BV(R_5.E)$ (and eliminated) before the join with $R_2$.
The requisite bloom filters or bitvectors can be constructed while building the hash tables.
False positives with bloom filters don't cause a correctness issue -- even if the tuples were generated, they would eventually be eliminated when joining with $R_5$.
Although effective in practice, this technique is only able to ``look-ahead'' a limited number of steps. For example, if those tuples were all eliminated by $R_4$ or
$R_6$ instead, we would still end up generating those intermediate tuples. The construction of the bloom filters and their use during execution are additional overheads
for this technique, but in our experimental evaluation, we found those overheads to be negligible. 

Both the above sets of techniques, as well as our approach, lead to increased ``robustness'' of the query execution engine, by reducing the sensitivity to the join
orders. This notion was formalized and analyzed in recent work on {\em look-ahead information passing}. We discuss this in more detail in Section~\ref{sec:robust}.

\begin{figure}
    \centering
    \includegraphics[scale=0.5]{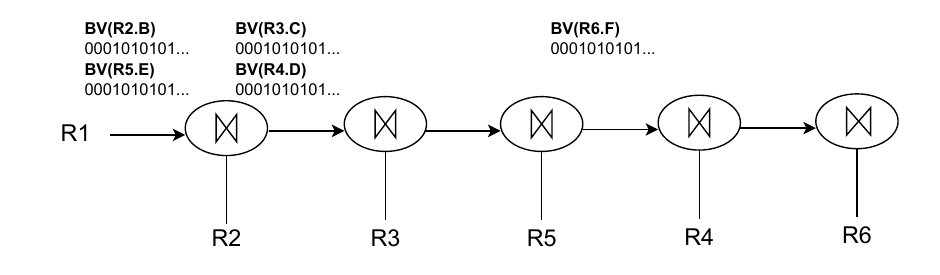}
    \caption{Pushing down bitvectors for early pruning}
    \label{fig:bitvector}
\end{figure}

\section{Cost Model and Optimization} \label{costmodel}
In this section, we formally develop the cost model that properly accounts for the avoidance of these redundant computations and discuss the intricacies of doing so. 
We then discuss how the cost model can be used to estimate the cost of a given plan under: (a) a factorized representation (COM), (b) a two-pass semi-join reduction (SJ), and (c) bitvector-based early pruning (BVP). Since COM is complementary to the other two, we consider four possible combinations, SJ+STD, BVP+STD, SJ+COM and BVP+COM, where STD denotes the standard execution and cost model that does not account for the use of factorized representation.
We then formulate the join order optimization problem, and develop several algorithms to solve it. We also analyze robustness of query plans under this approach following the recently proposed framework by~\cite{lip}.

\subsection{Selectivity $\rightarrow$ Match Probability and Fanout}
Much of the earlier work in this space does not make a distinction between the selectivity of a filter (predicate) and the selectivity of a join operator. 
However, those two are fundamentally different. The selectivity of a filter captures the probability that a random tuple from the input matches the filter. 
This can be extended to join operators to capture the average number of tuples produced a random input tuple (originally proposed by~\cite{kbz}); thus
the selectivity can be $> 1$, which is not an issue for the rank ordering algorithm. 
However, to model the number of probes accurately, we need to separate this into: 
\begin{itemize}
\item {\em Match probability ($m_i$),} the probability that an input tuple finds a match.
\item {\em Fanout ($fo_i$),} the average number of matches for an input tuple that does find a match.
\end{itemize}
It is easy to see that, for a join operator $\bowtie R_i$: $s_i = m_i \times fo_i$.

\subsection{Estimating Match Probabilities \& Fanouts}
\label{selectivity estimation}
Standard selectivity estimation techniques can be easily adapted to compute the match probabilities and fanouts. 
We briefly sketch adaptations of three selectivity estimation techniques here, and discuss how they can be modified to compute the match probabilities and fanouts with minimal additional statistics, and with the same computational cost as the original techniques.

For a join $R \bowtie_A S$, assuming $R$ is the probing relation, the simplest approach that assumes uniformity and
independence gives us: $s_i = \frac{1}{max(V(A,R), V(A, S))} |S|$, where $V(A, R)$ denotes the number of distinct values of $A$ in $R$. Under the same assumptions, we have that: 
$m_i = min(\frac{V(A,S)}{V(A,R)}, 1) = \frac{V(A, S)}{max(V(A, R), V(A, S))}$, and $fo_i = \frac{|S|}{V(A,S)}$.
If there is a predicate on $S$ with selectivity $s_p$, the estimate for $fo_i$ will be adjusted by that factor (unless $s_p|S| < V(A,S)$, in which case, we set $fo_i = 1$ and $m_i = min(\frac{s_p|A|}{V(A,R)}, 1)$). 

\begin{figure}
    \centering
    \includegraphics[width=3.0in]{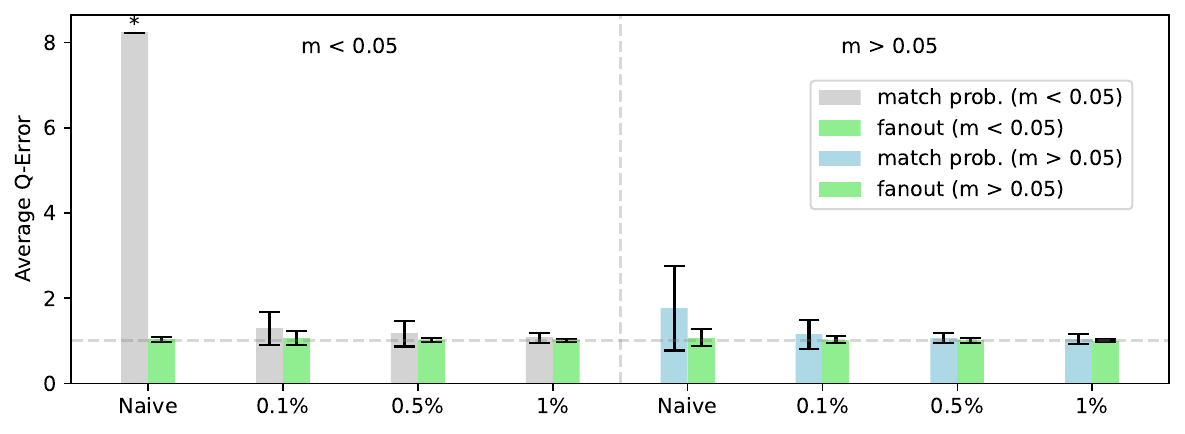}
    \caption{Sampling is highly effective at estimating match probabilities and fanouts ($* \rightarrow$ stddev. = 9.44)}
    \label{fig:expts_selectivity}
\end{figure}

Figure \ref{fig:expts_selectivity} shows the effectiveness of a sampling-based approach, which capture
correlations better, for estimating the match probabilities and fanouts. Here, we adapted the standard
correlated sampling approach~\cite{chen2017two,acharya1999join}, where we sample from one
relation uniformly at random, and with each sampled tuple, we maintain the number of matches it has in the other relation, as well as a uniform sample of its matches. The sample can be used to calculate match probabilities
and fanouts for queries of the type $\sigma_{R.a = x \wedge S.c = y}(R \bowtie_B S)$ (with appropriate scaling). In Figure \ref{fig:expts_selectivity}, we plot the Q-Error~\cite{moerkotte2009preventing} for the naive method discussed above,
as well as for three sample sizes; we used the DBLP datasets from the CE benchmark (Section \ref{evaluation}) where we generated queries randomly by choosing two of the (27) relations at random to join, and choosing the
predicates randomly. We separate out queries with low match probability since the error tend to be higher there. As we can see, even small samples are very effective at estimating these quantities, especially fanouts, whereas 
Naive performs poorly for queries with low match probabilities.

More sophisticated machine learning-based techniques~\cite{tzoumas,neurocard} 
typically already capture enough information to compute these components
separately. For instance, NeuroCard~\cite{neurocard} explicitly captures {\em join indicators} and {\em fanouts} as additional columns in the full outerjoin of all the relations in the database that it uses to learn a {\em deep
autoregressive neural network}; this is sufficient to compute $m_i$ and $fo_i$. As an example, if $1_R$ denotes the join indicator variable capturing whether a tuple from $R$ participates in the join
with $S$, then given a query: $\sigma_{R.B = 10}(R \bowtie_A S)$, the match probability can be seen as $P(R.B = 10 \wedge 1_R = 1)$ (we omit the fanout scaling corrections). 
We leave a more detailed exploration of such techniques to future work.

\subsection{Estimating the Cost of a Plan: COM}
\label{sec:costing}
The first question we look at is how to estimate the cost of a given plan assuming that we avoid redundant probes through use of a factorized intermediate representation; along the way we also formalize exactly what it means to avoid redundant probes. 
Following the approach developed for the rank ordering algorithm, we assume that every join operator has a constant cost per probe ($c_i$), and the main question here is to estimate the 
number of probes into each join operator. We first show this through an example and then develop the more general formulation.

Consider the join query shown in Figure \ref{fig:cost_model_figures}(i) with the driver relation $R_1$ and join order $R_2, R_3, R_5, R_4, R_6$. 
Letting $N$ denote $|R_1|$ (after any selections), the number of probes can be estimated as:
\begin{itemize}
\item $\bowtie R_2$: Since this is the first relation being probed, the number of probes = $N$.
\item $\bowtie R_3$: The join with $R_2$ produces $N * m_2 * fo_2 = N * s_2$ tuples, each with potentially a different value of $C$. Thus the number of probes here is $N*s_2$.

\item $\bowtie R_5$: The join with $R_5$ is on an attribute of $R_1$. To estimate the number of probes (which must be $< N$), we need to estimate the probability that a tuple of $R_1$ ``survives'' the two joins earlier. This
can be calculated to be\footnote{This is a simplification that assumes $E(c^Y) = c^{E(Y)}$ -- this is obviously not true in general, but is a reasonable approximation that we make to keep the exposition simple.}:
$$m_2 \times (1-(1 - m_3)^{fo_2})$$
The first term ($m_2$) captures the probability that the tuple finds a match in $R_2$. Assuming it found a match, $fo_2$ tuples are generated; the second term captures the probability that at least one of those $fo_2$ tuples
matches with $R_3$. 

\item $\bowtie R_4$: The number of $R_1$ tuples that survive joins with $R_2$ and $R_5$ is: $N * m_2 * m_5$, each of which produces $f_2$ tuples after join with $R_2$. Of those, $N * m_2 * m_5 * f_2 * m_3$ tuples also have a
match in $R_3$, and will need to be probed into $R_4$. We implicitly made use of the fact that the order in which previous joins are done doesn't matter, leading to a simplified calculation.

\item $\bowtie R_6$: Finally, the number of probes into $R_6$ can be estimated to be:
$$N \times m_2 \times (1-(1 - m_3m_4)^{fo_2}) \times m_5 \times fo_5$$
\end{itemize}

\begin{figure}
    \centering
    \includegraphics[width=3.7in]{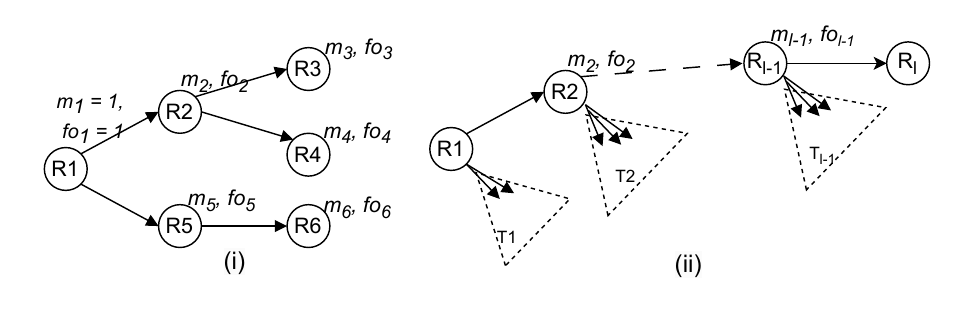}
    \caption{(i) Match probabilities and fanouts for the running example; (ii) A partially evaluated query plan}
    \label{fig:cost_model_figures}
\end{figure}

It would be useful for us to more succinctly capture the ``survival'' probability for a given set of join operators. Specifically, given a connected tree of join operators, $T$, consisting of node $T_r$ and 
a connected set of its descendants, 
we define $m_{T}$ to be the probability that a tuple survives all of those join operators. The list of join operators is sufficient to define $T$, so we will use that when convenient.

In general, $m_T$ can be recursively computed as follows. Let $T_1, \cdots, T_k$ denote the children of the root $T_r$ of $T$. Then: 
$$ m_T = m_{T_r} \times (1 - (1 - m_{T_1}m_{T_2}\cdots m_{T_k})^{fo_{T_r}})$$
To simplify the formulas, we assume $m_r$ (for the root node) $ = 1$.

As an example, $m_{1, 2, 3, 4} = m_1 \times m_2 \times (1 - (1 - m_3m_4)^{fo_2})$;
this allows us to more succinctly represent the overall cost of the above plan as:
$$N \times (1 + m_2fo_2 + m_{2,3} + m_2m_5fo_2m_3 + m_{1,2,3,4}m_5fo_5)$$
In contrast, without this optimization, we would have:
\begin{align*}
 N \times & \, (1 + m_{2}fo_{2} + m_{2}fo_{2}m_{3}fo_{3} + m_{2}fo_{2}m_{3}fo_{3}m_{5}fo_{5} +\\
& m_{2}fo_{2}m_{3}fo_{3}m_{5}fo_{5}m_{4}fo_{4})
\end{align*}
The two expressions are equivalent if $fo_? = 1$ for all the operators; however, as the fanouts increase, the difference between these terms becomes large as we show in our evaluation.

Next, we develop the more general formula. Consider a partially evaluated join order as shown in Figure \ref{fig:cost_model_figures}(ii), where our goal is to estimate the cost of the join operator $R_l$, where $R_2, .., R_{l-1}$ denote the ancestors of $R_l$. For each of the ancestors, we may have evaluated some of their other descendants, captured by $T_i$. Then, the estimated number of probes in $R_l$ is:
\begin{equation}
N \times m_2 fo_2 \times m_3 fo_3 \times ... m_{l-1} fo_{l-1} \times m_{T_1} \times m_{T_2} \times ...
\label{eq:probecost}
\end{equation}
Intuitively, we keep expanding (fanning out) along the path from root to $R_l$, but for any branches, we only care if the tuples survive the branch or not. If there no
branches, then this reduces to the more typical formula (Section \ref{sec:background}). We omit a detailed derivation of this formula due to space constraints, but we note that this is the expected number of probes assuming that, for a given join operator, any tuple either doesn't match or has exactly the same fanout. We also note that, as expected, this expression does not depend on the specific order in which all of those prior operators are evaluated.

\subsection{Join Order Optimization: COM}
The rank ordering algorithm is based on the {\em adjacent sequence interchange (ASI)} property that the
simpler cost function obeys~\cite{kbz}. Specifically, let $A$ and $B$ be two sequences (of join operators) and $V$ and $U$ be two non-empty sequences. We say that a cost function $C$ has the ASI property, if and only if there exists a function $T$ and a rank function defined as: $rank(S)= (T(S) - 1)/ C(S)$ 
such that the following holds: \\[2pt]
$C(AUVB) \le C(AVUB) \implies rank(U) \le rank(V)$, if $AUVB$ and $AVUB$ satisfy the precedence constraints.

Unfortunately, we can prove that:
\begin{theorem}
\label{thm:example}
The cost function developed above does not satisfy the ASI property.
\end{theorem}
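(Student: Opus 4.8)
The plan is to prove the theorem by exhibiting a single explicit counterexample in which the optimal relative order of two join subsequences is \emph{context dependent}, something no rank function of the form $rank(S)=(T(S)-1)/C(S)$ can capture. I will construct one acyclic query $Q$, two non-empty subsequences $U$ and $V$ of its join operators, and two surrounding contexts $(A_1,B_1)$ and $(A_2,B_2)$ --- all four arrangements respecting the precedence constraints --- such that $C(A_1UVB_1) < C(A_1VUB_1)$ but $C(A_2UVB_2) > C(A_2VUB_2)$. If $C$ had the ASI property, then for any two adjacent interchangeable subsequences $C(\cdots UV\cdots)\le C(\cdots VU\cdots)$ holds exactly when $rank(U)\le rank(V)$, so the two \emph{strict} cost inequalities above would force $rank(U)<rank(V)$ and $rank(V)<rank(U)$ simultaneously --- impossible, since $rank(U)$ and $rank(V)$ are numbers depending only on $U$ and $V$, not on the contexts. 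Hence no such $T$ exists, and the whole argument reduces to building the query and choosing parameters that produce this sign flip.

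To construct it I will exploit the one way in which the cost of Eq.~\ref{eq:probecost} departs from the classical left-deep cost: an already-probed operator lying on the root-to-$R_l$ path contributes a clean multiplicative factor $m\cdot fo$ to the probe count at $R_l$, whereas one sitting on a side branch contributes only the survival probability $m_T = m_{T_r}\big(1-(1-\prod_i m_{T_i})^{fo_{T_r}}\big)$, which is \emph{not} multiplicative in the new operators of the branch. I take the driver $R_1$ together with four relations forming a tree in which $R_w$ hangs off $R_1$, two relations $R_x$ and $R_u$ both hang off $R_w$ (so $R_w$ must precede each of them), and $R_v$ hangs off $R_1$; set $U=\langle R_u\rangle$ and $V=\langle R_v\rangle$. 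In the context where only $R_w$ has been probed before $U,V$, interchanging $U$ and $V$ changes the total probe count by a term proportional to $fo_w(1-m_v)-(1-m_u)^{fo_w}$. In the context where $R_w$ \emph{and} $R_x$ have been probed before $U,V$, inserting $R_u$ now merges into the already-populated side branch $\langle R_w,R_x\rangle$, and the change becomes proportional to $fo_w\,m_x(1-m_v)-\big((1-m_x m_u)^{fo_w}-(1-m_x)^{fo_w}\big)$ --- governed by the non-linear survival formula rather than by a clean factor. All remaining probe counts (inside $A_i$ and $B_i$, in particular the probes into $R_x$ when it sits in a suffix) are identical for the $UV$ and $VU$ orderings, since the tree shape and the set of relations probed before $U,V$ coincide, so they cancel. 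A short calculation with, say, $fo_w=10$, $m_v=0.9$, $m_u=0.5$, $m_x=0.1$ (all match probabilities in $(0,1]$, all fanouts $\ge 1$) makes the first term positive and the second negative, delivering the flip.

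The step I expect to be the main obstacle is this second one --- pinning down a query structure and feasible parameters that actually realize the sign flip. Most obvious small configurations do not: if $U$ and $V$ both attach directly to $R_1$, or if every extra context relation occupies the same tree position in both orderings, then the interchange cost is context-independent and all added operators cancel, leaving one inside the classical, ASI-satisfying regime. The flip only appears once a branch into which one of $U,V$ feeds has itself been partially populated, so that the non-linearity of $m_T$ is genuinely exercised; locating that minimal structure and then exhibiting concrete numbers is where the work lies. A minor expository caveat is that Eq.~\ref{eq:probecost} itself rests on the simplifying approximation $E[c^{Y}]=c^{E[Y]}$; since the theorem concerns the cost function exactly as defined, the counterexample is constructed directly against that definition and this approximation does not weaken the argument.
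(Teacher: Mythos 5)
Your proposal is correct, and I verified that your concrete numbers do realize the claimed sign flip: with $fo_w=10$, $m_u=0.5$, $m_v=0.9$, $m_x=0.1$, the interchange cost in the first context is proportional to $fo_w(1-m_v)-(1-m_u)^{fo_w}=1-0.5^{10}>0$, while in the second it is proportional to $fo_w m_x(1-m_v)-\bigl((1-m_xm_u)^{fo_w}-(1-m_x)^{fo_w}\bigr)\approx 0.1-(0.95^{10}-0.9^{10})\approx -0.15<0$, and the suffix terms indeed cancel because Eq.~\ref{eq:probecost} depends only on the set (not the order) of previously evaluated operators. However, your route differs from the paper's. The paper uses a single $7$-relation query in which $R_1$ joins $R_2$ and $R_3$, each of which has two children, with all $m_i=0.5$ and all $fo_i=1$ except $fo_2,fo_3$; it takes $U=R_5$, $V=R_6$, argues that since these two operators have identical statistics any rank function must tie them (a symmetry argument that needs no cost computation), and then observes that the cost of swapping them is nonzero and its sign depends on whether $fo_2<fo_3$. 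You instead fix one $5$-relation instance with fixed statistics and exhibit two different prefixes ($R_w$ alone versus $R_w,R_x$) under which the optimal relative order of the same adjacent pair flips, so no context-independent rank can be consistent with both comparisons. The paper's argument is shorter to state and sidesteps explicit cost algebra; yours is arguably more robust (it does not lean on ``by symmetry, no matter what $T$ is, the ranks are equal''), uses fewer relations, directly exhibits the context dependence that ASI is designed to rule out, and isolates the mechanism responsible --- the non-multiplicative survival term $m_T$ once a side branch has been partially populated. One shared caveat: your contradiction (and, in fact, the paper's own) requires the standard biconditional reading of ASI, which you state explicitly (``holds exactly when''), whereas the paper's displayed definition writes only the forward implication; this is a definitional nicety, not a gap in your argument.
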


\begin{proof}
We construct a counter-example with 6 joins, where $R_1$ (driver) joins with $R_2$ and $R_3$, that in turn join with $R_4, R_5$ and $R_6, R_7$ respectively. 
We set $m_i = 0.5$ for all $i$, and $fo_i = 1$ for all $i$ except $fo_2$ and $fo_3$. We consider two orders:  \\[2pt]
$R_2 \rightarrow R_3 \rightarrow R_4 \rightarrow R_7 \rightarrow R_5 \rightarrow R_6$ and \\[2pt]
$R_2 \rightarrow R_3 \rightarrow R_4 \rightarrow R_7 \rightarrow R_6 \rightarrow R_5$ \\[2pt] (i.e., $U = R_5$ and $V = R_6$). 

Because of the symmetry, no matter what $T(S)$ and $C(S)$ are, $rank(U) = rank(V)$. However, the costs of these two orders are
different if $fo_2 \ne fo_3$. In other words, the choice between these two orders depends on whether $fo_2 < fo_3$, which contradicts
the ASI property.
\end{proof}

We have implemented and evaluated an exhaustive optimal algorithm as well as two heuristics. 
The exhaustive algorithm (Algorithm 1) is a standard dynamic programming algorithm that finds the optimal order for every possible prefix of a valid join order (i.e., every connected subtree of the join graph that includes the driver relation) starting with prefixes of size 1. The cost function above does obey the principle of optimality, i.e., in the optimal join order, every prefix of it also has the optimal order by itself. This algorithm runs in time $O(n2^n)$ for a query with $n$ relations, but is much faster for non-star queries. We observe that the ASI property is obeyed in certain cases, either fully or partially (e.g., star queries obey it fully), and this can be used to further reduce the running time for the optimal algorithm, but we do not explore this further in this paper. 

\RestyleAlgo{ruled}
\SetKwComment{Comment}{/* }{ */}
\begin{algorithm}[t]

\SetAlgoLined
\KwIn{$\mathcal{J}$: rooted join tree with driver as the root ($r$); we use $\mathcal{J}$ to also denote the set of nodes in $\mathcal{J}$}
$CTs = \{S: S \subseteq \mathcal{J}, r \in S, S\ is\ connected\}$\;
\For{$T$ \textbf{in} $CTs$ \textbf{in} increasing order by size}{
    \For{$n \in T$} {
       $prefix = T \setminus \{n\}$\;
       \If{$prefix \in CTs$}{
          $cost = best[prefix] + est.\ probes\ into\ n\ (Eq. \ref{eq:probecost})$\;
          \If{$cost < best[T]$}{
            $best[T] = cost$
          }
       }
    }
}
$\textbf{return } best[\mathcal{J}]\ and\ the\ corresponding\ order$\;
\caption{Optimal Algorithm (given a driver)}
\end{algorithm}

We also experiment with three greedy heuristics each of which chooses the next operator greedily using the
formalism above, while obeying the precedence constraints, based on:
\begin{itemize}
    \item {\bf Selectivity ($m_i \times fo_i$)}: this simulates the rank ordering algorithm, which is equivalent to what today's query optimizers would use if restricted to left-deep plans.
    \item {\bf Expected number of tuples:} here we choose the next join operator such that the expected number of tuples after that join is minimized among all choices. This is analogous to a common query optimization heuristic where the next operator is chosen such that the intermediate result size after that operator is minimal among all options.
    \item {\bf Survival probability:} here we choose the next join operator such that the total survival probability of the prefix is minimized.
\end{itemize}

As we show through extensive experimental evaluation, the last heuristic finds plans nearly as good as the optimal algorithm in almost all cases. However, we can 
prove that:
\begin{theorem}
\label{thm:example1}
For any $f$ and any of the three heuristics above, there exists an input for which the heuristic produces a plan that is a factor $f$ worse than the optimal plan.
\end{theorem}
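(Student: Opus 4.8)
The plan is to prove the statement separately for each heuristic by exhibiting, for every target factor $f$, a small acyclic query together with match probabilities, fanouts, and probe costs on which the heuristic is forced (by its own scoring rule together with the precedence constraints) into a plan whose cost, computed via Eq.~\ref{eq:probecost}, exceeds $f$ times the cost of some other concrete plan that the optimal algorithm is therefore guaranteed to match or beat. Every instance has the same ``trap and detour'' shape: the heuristic's myopic criterion makes one operator look locally best, but committing to it forces a subsequent traversal of a high-fanout path (or a run of expensive probes) that could have been multiplied down by a branch discount $m_{T}$ had a locally unattractive operator been scheduled first. Since the input may depend both on $f$ and on which heuristic we attack, two gadgets suffice --- one for the ``selectivity'' and ``expected-size'' heuristics, and one for the ``survival-probability'' heuristic --- with the expected-size heuristic being caught by both.

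First gadget (selectivity and expected-size heuristics; with uniform probe costs both amount to scheduling eligible operators in increasing order of $m_i fo_i$, up to the precedence-respecting module contraction of KBZ). Let the driver $R_1$ join a leaf $R_2$ and the head $R_{A_1}$ of a path $R_{A_1}, R_{A_2}, \ldots, R_{A_d}$. On the path set $m_{A_j}=1$ and $fo_{A_j}=G$; for the leaf set $m_2=\epsilon$ but $fo_2 = 2G/\epsilon$, so that $s_2 = 2G > G = s_{A_j}$. The heuristic then descends the entire path before ever probing $R_2$, generating probe counts $N, NG, NG^2, \ldots$ with no branch discount, whereas the order $R_2, R_{A_1}, R_{A_2}, \ldots$ pays only $N$ for $R_2$ and thereafter discounts every probe along the path by $m_{\{R_2\}}=\epsilon$. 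Letting $G$ grow (with $\epsilon$ correspondingly small, or with the path lengthened and $\epsilon \to 0$) makes the cost ratio unbounded. I will also verify that the strongest reading of the selectivity heuristic --- full KBZ rank ordering with module contraction --- is fooled here, which holds because all path operators share the same rank, so no module containing them is ever pulled ahead of the higher-rank $R_2$.

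Second gadget (survival-probability heuristic, and again expected-size). This heuristic is the robust one, so the construction exploits its one blind spot: it never looks at the probe costs $c_i$. Let $R_1$ join a cheap leaf $R_2$ ($c_2=1$, $m_2=p$) and one or more expensive, highly selective leaves $R_3, \ldots, R_m$ ($c_i=M$ large, $m_i=q$ with $q<p$). Because $q<p$, at each step the heuristic prefers an $R_i$ (it shrinks the prefix survival probability more than $R_2$ would), so it schedules all the expensive probes first at cost $\Theta(NM)$ and only then does $R_2$; by contrast, $R_2$ first costs $N$ and then discounts each expensive probe by $m_2=p$, giving cost $\approx N + NpM$. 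Sending $M \to \infty$ drives the ratio to about $1/p$, which exceeds $f$ once $p<1/f$. Rank ordering does weigh $c_i$ and therefore defers the expensive leaves, so it is \emph{not} fooled by this gadget --- consistent with our needing the first gadget for that heuristic.

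What remains is routine: substitute the chosen parameters into Eq.~\ref{eq:probecost} to get closed forms for both plans' costs, check that the precedence constraints permit both orders and that no tie-breaking rule lets the heuristic escape the trap, and take the limits. I expect the only genuinely delicate points to be (i) verifying the survival-probability gadget --- confirming that greedily minimizing the prefix survival probability keeps selecting the expensive operators at every step and that no later step lets it recover --- and (ii) making the first gadget robust against the module-contraction version of rank ordering, not merely against naive greedy selection.
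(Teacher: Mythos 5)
Your first gadget is correct and does the job for the selectivity and expected-size heuristics: with the path operators at selectivity $G$ and the leaf at $m_2=\epsilon$, $fo_2=2G/\epsilon$, both rules (and even the module-contracted KBZ rank, since every contiguous path module has rank $G-1 < 2G-1$) descend the path first, paying $N(1+G+\cdots+G^{d-1})$ versus $N(1+\epsilon\sum_j G^j)$ for the leaf-first order, so the ratio approaches $1/\epsilon$ and any $f$ is exceeded. This is close in spirit to, though not identical with, the paper's single construction, which hides an operator with $m=0$ beneath a high-fanout operator and makes the rest of the query expensive: all three greedy rules defer that branch because entering it never looks locally attractive, while the optimal plan enters it immediately, the prefix survival probability drops to $0$, and every remaining operator receives zero probes.

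The genuine divergence, and the place where your argument is fragile, is the survival-probability heuristic. Your second gadget defeats it only by making the highly selective leaves expensive per probe ($c_i=M\to\infty$). That is admissible under the paper's general model, which carries per-operator probe costs $c_i$, but note that Algorithm 1, the worked cost expressions, and the experiments all measure plan cost as a probe count (unit $c_i$), and under unit costs your star gadget collapses: for a star query the number of probes into the $j$-th evaluated leaf is $N$ times the product of the match probabilities of the previously evaluated leaves, so ordering by increasing $m_i$ --- which is exactly what the survival heuristic does --- is optimal, and no uniform-cost star instance can exhibit any gap at all. So if the theorem is read in the probe-count setting, your proof of the survival case fails and cannot be repaired within that query shape; you would instead need the paper's lookahead trap (or a variant): a child with $m=0$ (or negligible $m$) hidden under a parent whose own match probability is not the smallest among eligible operators, so that the greedy survival rule never learns that entering that branch annihilates the prefix survival, while the remaining operators form a chain with selectivities above $1$ whose cost can be inflated arbitrarily. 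That single instance defeats all three heuristics simultaneously with uniform costs, which is why the paper needs only one construction where you need two. A minor additional point: in your second gadget the claim that the expected-size heuristic is also trapped requires fixing the leaves' fanouts (e.g., all equal to $1$); as stated the fanouts are unspecified.
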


The worst cases are designed by hiding an operator with $m = 0$ under another operator with a high $fo$, so that the greedy heuristics don't consider that branch. 
Although the heuristics can be tweaked to avoid that worst case, other worst case inputs can be designed using similar ideas. 

\subsection{Costing and Optimization: BVP and COM+BVP}
Next, we discuss how our formalism can be used to estimate the cost of a given plan under bitvector-based early pruning (BVP) and the 
combination of BVP and COM.

One important difference here is that we need to count the number of probes into the bitvectors separately from the number of probes into the hash tables, since the former is significantly cheaper. We assume a false positive probability of $\epsilon$ for the bitvectors. Given that, the number of probes into the bitvectors and the hash tables can be estimated in a relatively straightforward manner. Consider the plan shown
in Figure \ref{fig:bitvector}. For each driver tuple (from $R_1$), the first probe is into the bitvector for $R_2$, which is successful with a probability of $m_2+\epsilon$ (since $m_2$ is the probability of the tuple finding a match in $R_2$). Hence, after the first two bitvectors are probed, we expect a total of $N \times (m_2+\epsilon)(m_5+\epsilon)$ tuples (from $R_1$) to remain. Those tuples will be probed into the hash table on $R_2$, and the resulting tuples will be probed into the bitvectors for $R_3$ and $R_4$ in that order. Overall, we get:
\begin{align*}
N \times &(1 + (m_2+\epsilon) + m_2(m_5+\epsilon)fo_2 + m_2(m_5+\epsilon)fo_2(m_3+\epsilon) +  \\ 
  &      m_{2}m_5fo_{2}m_{3}fo_{3}(m_{4}+\epsilon)fo_5)
\end{align*}
probes into the bitvectors; total probes into the hashtable can be calculated as:
\begin{align*}
 N \times & \,((m_{2}+\epsilon)(m_{5}+\epsilon) + m_2(m_{5}+\epsilon)fo_{2}(m_3+\epsilon)(m_4+\epsilon) + \\
&m_{2}(m_5+\epsilon)fo_{2}m_{3}(m_4 + \epsilon)fo_{3} + m_{2}m_5fo_{2}m_{3}(m4+\epsilon)\\
&fo_{3}fo_5(m_6+\epsilon) + m_{2}fo_{2}m_{3}fo_{3}m_{4}fo_{4}m_{5}fo_{5}(m_6+\epsilon))
\end{align*}

Although the above formulas look complex, the implementation only requires minor modifications to the cost models from the previous seciton. 
We can similarly obtain an estimate for the cost of a join order that avoids
the repeated probes by combining the procedure developed in the prior section with this. The main difference is in the probes into $R_5$. 
The above formula counts:\\[4pt]\mbox{\ \ } $N \times m2(m5+\epsilon)\underline{fo_2m_3(m_4+\epsilon)fo_3}$\\[4pt] probes into $R_5$. However, since the probes into $R_5$ are on an attribute of $R_1$,
we only care whether a tuple from $R_1$ survives the joins with $R_2$ and $R_3$ (along with surviving the bitvector probes into $R_5$ and $R_4$). 
This can be estimated as: \\[4pt] \mbox{\ \ }$N \times m_2(m_5+\epsilon)\underline{(1-(1-m_3(m_4+\epsilon))^{fo_2})}$.  \\[4pt]
As before, the main difference here is that the fanouts are taken out of equation since they don't matter for the probe into $R_5$.

Ding et al.~\cite{bitvector_aware_query_proc} noted that the principle of optimality does not hold when bitvectors are used for early pruning, and claim that the 
optimization time increases exponentially with the number of relations. However, when restricted to left-deep plans, this is not entirely correct. 
Assuming all possible bitvectors are used and pushed down as far as possible, we can show that: 
\begin{theorem}
\label{thm:optimality}
When restricted to left-deep plans and a specific driver relation, the principle of optimality holds for the cost model that uses bitvectors for early pruning. 
\end{theorem}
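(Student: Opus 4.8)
The plan is to reduce the claim to the single invariant that makes a subset dynamic program in the style of Algorithm~1 correct, namely: the cost of a left-deep plan can be written as a sum, over the relations in join order, of an \emph{incremental cost} $\delta(P,R_l)$ of appending $R_l$ after a prefix that joins exactly the relation set $P$ --- where $P$ is connected and contains the driver, $P\cup\{R_l\}$ is connected, and $\delta(P,R_l)$ depends on $P$ only as a \emph{set}, not on how $P$ was linearized and not on the suffix. Given this, $best[P\cup\{R_l\}]=\min_{R_l}(best[P]+\delta(P,R_l))$, which is exactly the principle of optimality for the bitvector cost model. So the whole proof amounts to exhibiting the right $\delta$ and checking that it is well defined.

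First I would fix the join tree rooted at the driver and locate every bitvector probe. Since all bitvectors are built and pushed down maximally, the probe into $BV(R_k)$ lands immediately after the processing of the topmost relation $a(R_k)$ that carries $R_k$'s join key; by the running-intersection property $a(R_k)$ is an \emph{ancestor} of $R_k$ (or the driver itself), and both $a(R_k)$ and the ``batch'' $B(v)=\{R_k : a(R_k)=v\}$ of bitvectors co-located right after a node $v$ are functions of the join tree alone. I would then define $\delta(P,R_l)$ to account for (a) the hash-table probes into $R_l$, plus (b) the probes into every $BV(R_k)$ with $a(R_k)=R_l$ --- with the convention that co-located bitvectors in a batch $B(v)$ are probed in a fixed canonical order; for $R_l$ the driver only (b) is present, and it includes the leading $N$ probes. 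Every probe in any execution is then charged to exactly one relation's increment (a hash probe to the relation it enters; a bitvector probe to the unique relation $a(R_k)$ that first makes its key available), so summing $\delta$ along any join order reproduces the model's total cost. This gives the additivity half of the reduction.

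The substantive half is that $\delta(P,R_l)$ depends on $P$ only as a set. For term (a) this extends the order-independence already invoked for Eq.~\ref{eq:probecost}: the number of hash probes into $R_l$ is $N$ times a product of $m\cdot fo$ factors along the root-to-$R_l$ path, times survival factors $m_{T}$ for the off-path branch fragments $T=P\cap(\text{branch})$ explored so far, times one $(m_j+\epsilon)$ factor for each bitvector whose landing point $a(R_j)$ is a strict ancestor of $R_l$. Each ingredient is a set-function of $P$: the entire root-to-$R_l$ path is forced into $P$ by the precedence constraints; the branch fragments are literally the intersections of $P$ with the branches, and each $m_T$ depends only on the subtree $T$ through the recursive definition of $m_{(\cdot)}$; and the collection of bitvectors hitting the path depends only on the join tree and $R_l$, and all their landing points therefore lie in $P$. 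For term (b), the number of probes into $BV(R_k)$ with $a(R_k)=R_l$ equals the number of hash probes into $R_l$, times $m_{R_l}fo_{R_l}$, times the running product of $(m_j+\epsilon)$ over the members of $B(R_l)$ that precede $R_k$ in the canonical order --- again a function of $P$ and $R_l$ only. Combining the two halves, $\delta$ is well defined and additive, so the principle of optimality holds for BVP; composing with the COM accounting of Section~\ref{sec:costing} (which merely turns certain $fo$-chains into survival factors, themselves set-functions of the prefix) extends it to COM+BVP, and in particular an $O(n2^n)$ subset dynamic program is optimal in both cases. This does not contradict~\cite{bitvector_aware_query_proc}: their negative observation is about general, possibly bushy plans, where a filter built from one sub-plan can be applied inside a sibling sub-plan and the cost of a ``prefix'' genuinely depends on its internal shape; restricting to left-deep plans with a fixed driver, together with maximal push-down, is precisely what collapses that dependence onto the relation set.

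I expect the well-definedness of term (a) to be the main obstacle. The subtle point is that the \emph{actual} number of hash probes into $R_l$ in a given execution really does depend on the order inside $P$ --- for instance, on whether a sibling branch is interleaved before or after a common ancestor of $R_l$ --- and yet the \emph{incremental} cost $\delta(P,R_l)$ must not. Resolving this tension requires carefully tracking which probes are charged to which relation's increment, and it also forces one to pin down the co-located-bitvector order up front; if instead those were ordered by the eventual join order, $\delta$ would be well defined only up to an additive term over the cheap bitvector probes within a single batch, and the statement would need a correspondingly mild qualification.
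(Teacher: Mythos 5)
Your proposal is correct and follows essentially the same route as the paper: the paper's entire justification is the single observation that, with the driver fixed and all bitvectors pushed down maximally, the output of any sub-plan (and hence the incremental cost of the next step) depends only on the \emph{set} of relations joined so far, and that Ding et al.'s counterexample fails here because it compares plans with different drivers. Your explicit $\delta(P,R_l)$ decomposition and the charging of each bitvector's probes to its parent in the join tree is a faithful (and more detailed) spelling-out of that same insight, including the minor within-batch bitvector-ordering caveat the paper glosses over.
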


The key insight here is that, when the driver is fixed, the output from a sub-plan is independent of the join order used until that point. The example 
given by Ding et al.~\cite{bitvector_aware_query_proc} uses two left-deep plans that have different driver relations. 

Thus, the complexity of the optimization problem does increase, but {\bf linearly}, not exponentially. In other words, we need to find the optimal plan for each of the $n$ possibilities for the driver, and pick the best among those. We note that the exhaustive algorithm we developed earlier already works in this fashion.

An important assumption made here is that all possible bitvectors are used and pushed down as far as possible. An interesting direction for future work is to develop
optimization techniques that can find the optimal set of bitvectors to use for a given query using our framework. We believe our formalism can be easily extended to handle this. However, Theorem \ref{thm:optimality} does not hold in that case.

\subsection{Costing and Optimization: SJ and COM+SJ}
The formalism also naturally extends to the case when semijoins are used for full reduction, before doing the joins (i.e., when using the Yannakakis algorithm). To make
the discussion concrete, we consider a two-phase implementation, where the second phase produces the join results. Specifically, let $r$ denote the root of the tree. In
the first phase, we recursively reduce the relations using their children, starting with the parents of the leaves and ending with fully reducing $r$. For a relation
$p$ with children $c_1, \ldots, c_k$, for each of its tuples, we check whether it has a match in each of the children, and if not, the tuple is discarded. We assume
that these ``semi-joins'' are {\em exact}, i.e., there are no false positives. If approximate techniques (e.g., a bloomfilter) is used instead, the analysis is similar, but the cost formulas become more complex since we need to account for the false positive rate (as above).

At the end of the first phase, the root relation ($r$) is fully reduced, the leaves are not reduced at all, and the other relations are partially reduced. In the second phase, we directly produce the join results by starting with the fully reduced root relation, i.e., by using a left-deep plan with the root as the driver.  Note that, we do not explicitly reduce the other relations in this implementation; although some descriptions of the Yannakakis algorithm separate the second semi-join pass and result generation, that leads to unnecessary overhead with no change in guarantees.

Since we need to compute {\em adjusted} match probabilities and fanouts after reduction, we first derive a formula for that. Let $p, c$ denote a parent-child pair, with $m_{p \rightarrow c}$ and $fo_{p \rightarrow c}$ denoting the match probability and fanout when probing from $p$ into $c$. 
If $c$ is reduced by a factor of $ratio$ independently of $p$ (e.g., because of other semi-joins with its own children), then:
\begin{theorem}
\label{thm:adjusted_stats}
The adjusted match probability and fanout when probing from $p$ into reduced $c$ are:
$$m'_{p \rightarrow c} = m_{p \rightarrow c} \times (1 - (1 - ratio)^{fo_{p \rightarrow c}})$$
and
$$fo'_{p \rightarrow c} = fo_{p \rightarrow c} \times ratio / (1 - (1 - ratio)^{fo_{p \rightarrow c}})$$
\end{theorem}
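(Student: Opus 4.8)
The plan is to reason about a single tuple $t$ of the parent relation $p$ and track how many of its matches in $c$ survive the reduction. First I would recall the modeling assumption underlying the cost model of Section~\ref{sec:costing}: a tuple $t$ either has no match in $c$, which happens with probability $1 - m_{p \rightarrow c}$, or it has exactly $fo_{p \rightarrow c}$ matches, which happens with probability $m_{p \rightarrow c}$. Because $c$ is reduced by keeping each of its tuples independently with probability $ratio$, and this reduction is independent of $p$, each of the $fo_{p \rightarrow c}$ matches of $t$ survives independently with probability $ratio$. Let $X$ denote the number of surviving matches of $t$, conditioned on $t$ having had a match; then $X \sim \mathrm{Binomial}(fo_{p \rightarrow c}, ratio)$, with the usual caveat (already invoked elsewhere in the paper) that a non-integer $fo_{p \rightarrow c}$ is handled through the approximation $E(c^Y) = c^{E(Y)}$, so that $P(X = 0) = (1 - ratio)^{fo_{p \rightarrow c}}$.

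Given this setup, the adjusted match probability is immediate: $t$ has a surviving match if and only if it had a match originally and $X \ge 1$, so $m'_{p \rightarrow c} = m_{p \rightarrow c} \cdot P(X \ge 1) = m_{p \rightarrow c}\,(1 - (1 - ratio)^{fo_{p \rightarrow c}})$, which is the first claimed formula. For the adjusted fanout, I would use that $fo'_{p \rightarrow c}$ is, by definition, the expected number of matches conditioned on there being at least one, i.e. $E[X \mid X \ge 1]$. Since $X \ge 0$ and the event $X = 0$ contributes nothing to $E[X]$, we have $E[X \mid X \ge 1] = E[X] / P(X \ge 1) = fo_{p \rightarrow c}\cdot ratio / (1 - (1 - ratio)^{fo_{p \rightarrow c}})$, which is the second formula. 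As a consistency check I would note that multiplying the two gives $m'_{p \rightarrow c}\, fo'_{p \rightarrow c} = m_{p \rightarrow c}\, fo_{p \rightarrow c}\, ratio$, i.e. the selectivity of the probe scales by exactly $ratio$, exactly as one expects when an independent fraction $ratio$ of the matches is retained.

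There is no deep obstacle here; the work is essentially bookkeeping with a binomial. The one point that needs to be stated carefully rather than proved is the independence assumption --- that the survival of a tuple of $c$ under its own downstream semi-joins is independent of which tuples of $p$ it joins with --- which is precisely the hypothesis ``$c$ is reduced by a factor of $ratio$ independently of $p$'' in the statement; without it the conditional distribution of $X$ is not binomial and the clean closed form breaks. The only other subtlety, the non-integer fanout, is resolved by the same approximation the paper already adopts, so I would simply flag it and move on.
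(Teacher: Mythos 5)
Your proof is correct and follows essentially the same route the paper intends: the paper only gestures at ``standard random sampling arguments'' (each of the $fo_{p \rightarrow c}$ matches surviving independently with probability $ratio$), and your binomial bookkeeping, the identity $E[X \mid X \ge 1] = E[X]/P(X \ge 1)$, and the consistency check $s'_{p \rightarrow c} = ratio \times s_{p \rightarrow c}$ are exactly that argument made explicit, including the same caveats (independence of the reduction from $p$, constant fanout, and the $E(c^Y) = c^{E(Y)}$ approximation for non-integer fanouts).
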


These formulas hold as long as $c$ is reduced independently of $p$ (they rely on standard random sampling arguments). We also note that:\\[2pt] $s'_{p \rightarrow c} = m'_{p \rightarrow c} fo'_{p \rightarrow c}  = ratio \times m_{p \rightarrow c} fo_{p \rightarrow c} = ratio \times s_{p \rightarrow c}$, which matches the standard formulas for selectivity estimation.

Given this, we can estimate the costs of the two phases for the full reduction plan (Figure \ref{fig:full_reduction}) as follows.
For the first phase, the number of semi-join probes (assuming $R_2$ tuples are first semi-joined with $R_3$, and $R_1$ tuples are first semi-joined with $R_2$) can be estimated as:
$$|R_2| + m_3|R_2| + |R_5| + |R_1| + (1 - (1-m_3m_4)^{fo_2})m_2|R_1|$$
The first two terms capture the semi-joins of $R_2$ with $R_3$ and $R_4$, respectively. The third and fourth terms capture $R_5 \ltimes R_6$ and $R_1 \ltimes R_2$,
    respectively. The last term that captures $R_1 \ltimes R_5$ uses the above theorem ($ratio = m_3m_4$) to calculate the number of 
tuples that survive the semi-join with $R_2$.

In the second phase, the total number of probes depends on whether we are using the factorized representation (COM) or not (STD). We make three observations: (1) the match probabilities from the parent to child after the first step are all 1 (this is guaranteed due to the reduction step);
(2) the fanouts from parent to child can be computed as above using the reduced sizes of the children; and (3) $s_{p \rightarrow c} = m_{p \rightarrow c} fo_{p \rightarrow c} = fo_{p \rightarrow c}$. For example, the fanout and selectivity from $R_1$ to $R_2$ is:
$$fo'_{R_1 \rightarrow R_2} = fo_{R_1 \rightarrow R_2} \times ratio / (1 - (1 - ratio)^{fo_{p \rightarrow c}})$$
So the standard cost formula and the modified cost formula from Section \ref{sec:costing} can be used to estimate the number of probes (the formulas are much simpler since all the $m$ terms evaluate to 1).
We make the observation for COM that: 
\begin{theorem}
\label{thm:full-reduction}
If repeated probes are avoided, then the cost of the third phase is independent of the join order used in that phase.
\end{theorem}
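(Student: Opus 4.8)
The plan is to show that, once the first phase has reduced the relations, the estimated number of probes into each relation during the result-generation phase is completely determined by the rooted join tree and the (reduced) statistics, with no dependence on the permutation in which the joins are performed; summing over relations then gives an order-independent total cost.

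First I would invoke observation~(1) recorded just above the theorem: at the end of the first phase, every surviving tuple of a parent $p$ has a match in each of its children, so the parent-to-child match probability satisfies $m_{p\to c}=1$ for every edge of the tree. This is stable under the remaining steps of the first phase, since later reductions only delete tuples of $p$ and never disturb the already-finalized children. Hence in the result-generation phase every join operator $\bowtie R_l$ has $m_l=1$. Second I would push $m=1$ through the survival recursion $m_T = m_{T_r}\bigl(1-(1-m_{T_1}\cdots m_{T_k})^{fo_{T_r}}\bigr)$: because a fanout is an average number of matches conditioned on there being at least one, $fo\ge 1$, so $(1-1)^{fo_{T_r}}=0$, and a short induction on $|T|$ gives $m_T=1$ for every connected subtree $T$ of the reduced join tree.

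Third I would substitute into the general probe-count formula (Eq.~\ref{eq:probecost}). For a relation $R_l$ with proper ancestors $R_2,\dots,R_{l-1}$ and already-evaluated branches $T_1,T_2,\dots$, that formula reads $N\, m_2 fo_2\cdots m_{l-1}fo_{l-1}\, m_{T_1}m_{T_2}\cdots$; with every $m$-term equal to $1$ it collapses to $N\prod_i fo_i$, the product running over the proper ancestors $R_i\neq r$ of $R_l$ (equivalently, the fanouts along the root-to-$R_l$ path, excluding the edge into $R_l$ itself). The branch factors $m_{T_j}$ have all vanished, so this value is the same regardless of which branches happen to have been processed when $R_l$ is reached, i.e., it is independent of the join order. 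Multiplying by the per-probe cost $c_l$ and summing over $l$ gives a total cost $N\sum_l c_l\prod_i fo_i$ that is manifestly order-independent, which is the claim. This is also exactly where the factorized representation is essential: without it, the number of probes into $R_l$ is the size of the current intermediate result, $N\prod_{j\in S}fo_j$ over the set $S$ of relations already joined, and $S$ genuinely changes with the order.

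I expect the only real care to be needed in two places. In the second step one must confirm $fo\ge 1$ so that the exponentiation evaluates to $0$ rather than $1$; this is immediate from the definition of fanout but worth stating. In the third step the bookkeeping should be made precise: that the ``ancestor'' product in Eq.~\ref{eq:probecost} is indeed the product of fanouts along the root-to-$R_l$ path, and that the precedence constraints, which merely restrict the set of admissible orders, do not alter the common cost value attained on all of them. Neither point is deep, but both should be spelled out for the argument to be airtight.
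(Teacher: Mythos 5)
Your argument is correct and follows essentially the same route the paper intends: it rests on the observation stated just before the theorem that full reduction makes every parent-to-child match probability equal to $1$, after which the probe-count formula of Eq.~\ref{eq:probecost} reduces to $N\prod_i fo_i$ over the root-to-$R_l$ path, a quantity fixed by the join tree and the reduced statistics rather than by the evaluation order. The two points you flag ($fo \ge 1$ so the survival terms collapse to $1$, and the order-independence of the ancestor product) are exactly the bookkeeping the paper's deferred proof would spell out, so nothing is missing.
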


\noindent\textbf{Optimization:} Finally, unlike the other techniques, finding the best full-reduction plan is much simpler. 
The three optimization decisions are: (1) which relation to use as the driver, (2) the order of semi-joins for each internal node (and root) in the first phase, and (3) the order of joins in the second phase.
For (1), we can try out all the relations as the driver and pick the best one. For (2), given a parent $p$ with children $c_1, \ldots, c_k$, they should be probed in the order of increasing $m'_{p \rightarrow c_i}$. 

The decisions for (3) depend on whether we are using COM or STD. For STD, the rank ordering algorithm will generate the optimal order, which in this case, is the order
of increasing $fo'$ (adjusted fanouts, since all selectivities are 1).
For COM, the optimal join order is to sort the relations by the product of the fanouts from the root to the relation. We defer the proof to the full version of the paper.

Note that, for any of the COM approaches, there is an extra ``expansion'' step at the end if we desire the output in the flat representation. We explicitly account 
for this cost when comparing a COM approach against an STD approach.

\subsection{Robustness Analysis: COM}
\label{sec:robust}
Next, we analyze the robustness of our cost model using an approach developed in~\cite{lip} for COM; similar analyses can be conducted for the other variants. An evaluation strategy $\mathcal{E}$ is said to be $\theta$-fragile and $\Theta$-robust with respect to a plan space $P$ if the maximum deviation in performance of any plan in $P$ from the best plan $\mathcal{E}_b$ is bounded between $\theta$ and $\Theta$.
The deviation is computed by normalizing the costs by the driver table cardinality (i.e., by considering the cost per tuple) and by the spread of the join operator selectivities ($s_{max} - s_{min}$).  
It is shown in~\cite{lip} that, for the simpler evaluation approach, for a star query with $n$ dimension tables, $$\theta = \dfrac{1-s_{min}^{n-1}}{1-s_{min}}$$ 
That paper doesn't show it, but we can also derive:
$$\Theta = \dfrac{1}{s_{max}-s_{min}}\sum^{n-2}_{i=1} s_{max}^i-s_{min}^i$$

Making a similar argument, we can show that our approach further reduces the fragility to plan selection by narrowing the spread between the best and worst plan making the lower bound $\theta$ smaller. Specifically, we can show that: $$\theta = \dfrac{1-m_{min}^{n-1}}{1-m_{min}}$$ where $m_{min}$ is the smallest match probability. 
Similarly, we show that (proofs can be found in~\cite{optimizing_m-n_joins}):
$$\Theta =  \dfrac{1}{m_{max}-m_{min}}\sum^{n-2}_{i=1} m_{max}^i-m_{min}^i
$$

To better understand the empirical implications of these results, we ran a series of simulation experiments using a star query with 10 relations, where we introduce variations in the selectivities from optimization to execution time. For each set of runs, $m_i$ and $fo_i$ were selected uniformly at random from a set of ranges ($[0.05-0.2], [0.05-0.5], [0.1-0.5]$, and $[0.5-0.9]$ for $m_i$, and $[1-2], [1-10]$, and $[10-100]$ for $fo_i$). $s_i$ is always set to be $m_i \times fo_i$. We experimented with introducing a low degree of variation (in the range of 15-20\%) and a high degree of variation (90-95\%). Thus the latter models a situation where the selectivity estimates are really off, stressing the robustness of the approach.

\begin{figure}
\centering
    \includegraphics[width=.45\textwidth]{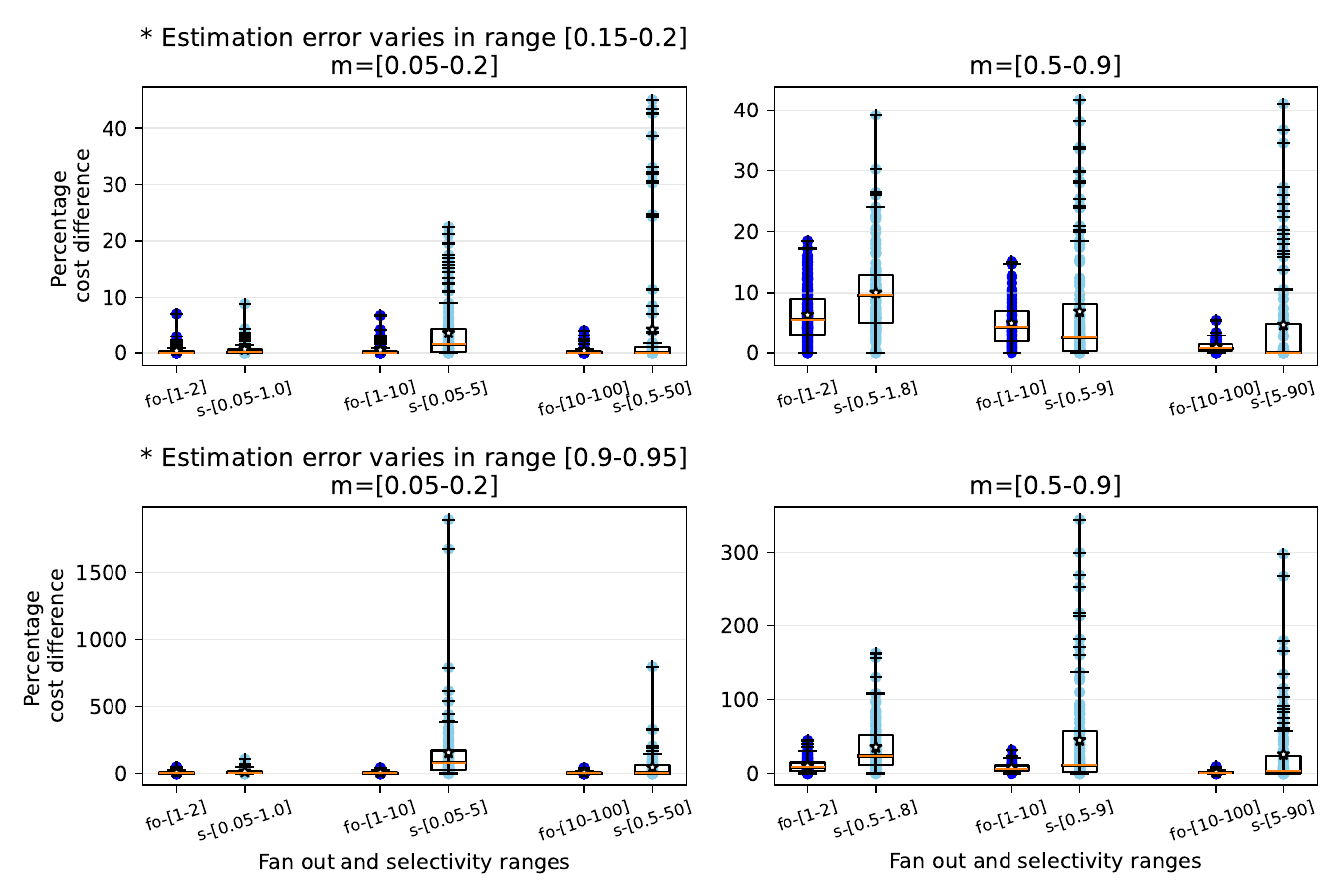}
    \caption{Variation of the percentage of cost difference between actual best plan and the estimated best plan for two different estimation error ranges}
    \label{fig:boxplot_all_er}
\end{figure}

For each set of ranges, we generated 100 samples uniformly at random, and measured the cost difference between the best join order based on the actual selectivities and the best join order based on the estimated costs for both cost models. Figure~\ref{fig:boxplot_all_er} shows these results, with the top 2 plots depicting the low variation scenario. It is evident from comparing the top plots with the bottom plots that selectivity-based cost models are impacted greatly by escalating estimation errors. Across all the match probability ranges and all the fan-out ranges, the new cost model is more robust against estimation errors compared to the cost model based on selectivities. Also, we notice that when the fanout is in the range $[1-2]$, both models show similar behavior to estimation errors since $s_i$ is off only by at most a factor of 2 compared to $m_i$ in that case.

To summarize, this robustness analysis and the empirical simulation illustrate that, in presence of many-to-many joins, it is critical to properly account for the repeated probes, both in the implementation of the join algorithms, and in cost modeling for optimization. Without this, the differences between different plans may be highly exaggerated, especially in presence of high fanout (exploding) joins, in turn exaggerating the benefits of more complex query optimization or selectivity estimation techniques.

\section{Vectorized Query Execution Architecture} \label{vectorized execution}
In this section, we sketch the query execution architecture of our prototype that we have built for comparing these approaches, specifically, a modified hash join operator as well as a vectorized intermediate result representation, that enables us to avoid redundant probes 
during a one-pass execution. We base our hash join implementation on a vectorized hash join implementation used in DuckDB~\cite{duckdb2019}. 
We then present extensions to incorporate bitvector-based early pruning as well as semijoin-based full reduction in our implementation.

\subsection{Overview}
\label{sec:overview architecture}
DuckDB uses a vectorized interpreted execution engine that executes the query in a ``Vector-Volcano'' model, using vectors of a fixed maximum size (by default 1024). While fixed-length types like integers are stored as native arrays, variable length types like strings are represented as a native array of pointers pointing into a separate string heap. Null values are represented using a separate bitvector. Data
vectors may also have a selection vector that indicates the offsets into the vector stating the relevant indices in the vector (e.g., in a selection or to indicate values that participate in a join/probe). DuckDB supports a vast number of relational vector operations including relational joins which operate on vectors of data instead of doing tuple-at-a-time processing~\cite{duckdb2019}.

Our prototype uses a similar vectorized interpreted execution engine, but allows switching  between {\bf six different approaches}. First, it allows using either: (1) the standard execution model (STD), that fully constructs all 
the intermediate tuples after each join (as a vector of vectors), or (2) the factorized execution model (COM) that uses a factorized intermediate representation to which additional vectors are appended after each join. COM needs
a final ``expansion'' step to generate all the result tuples at the end. For either COM or STD, we can turn on either bitvector-based early pruning (Section~\ref{sec:bitvector}) or full semijoin-based reduction
(Section~\ref{sec:fullreduction}), or leave both off (giving us a total of 6 options).  
Each relation is loaded into memory as {\em DataChunks} consisting of 2048 tuples by default, with each DataChunk consisting of as many vectors as the number of
attributes in the relation.
The joins are executed in a pipelined fashion batch-at-a-time, where the results of one join (for a batch) are immediately emitted to the subsequent joins. 
The execution engine supports vectorized hash join operations using SIMD instructions like non-contiguous loads (gathers) and stores (scatters). Vector processors take advantage of data-level parallelism by using multiple replicated lanes in the vector unit. Each lane, called a vector lane, includes a portion of the vector register file, a data path from the vector functional unit, and one or more ports to access
memory~\cite{vector_threading}. We implemented a general approach called {\em vertical vectorization}, where each vector channel processes a different input key during
the join. Each vector lane handles a key and accesses a specific position in the hash table~\cite{SIMD}. We expand on these ideas in the following sections.

\begin{figure}%
\hspace*{-0.35in} \includegraphics[scale=0.55]{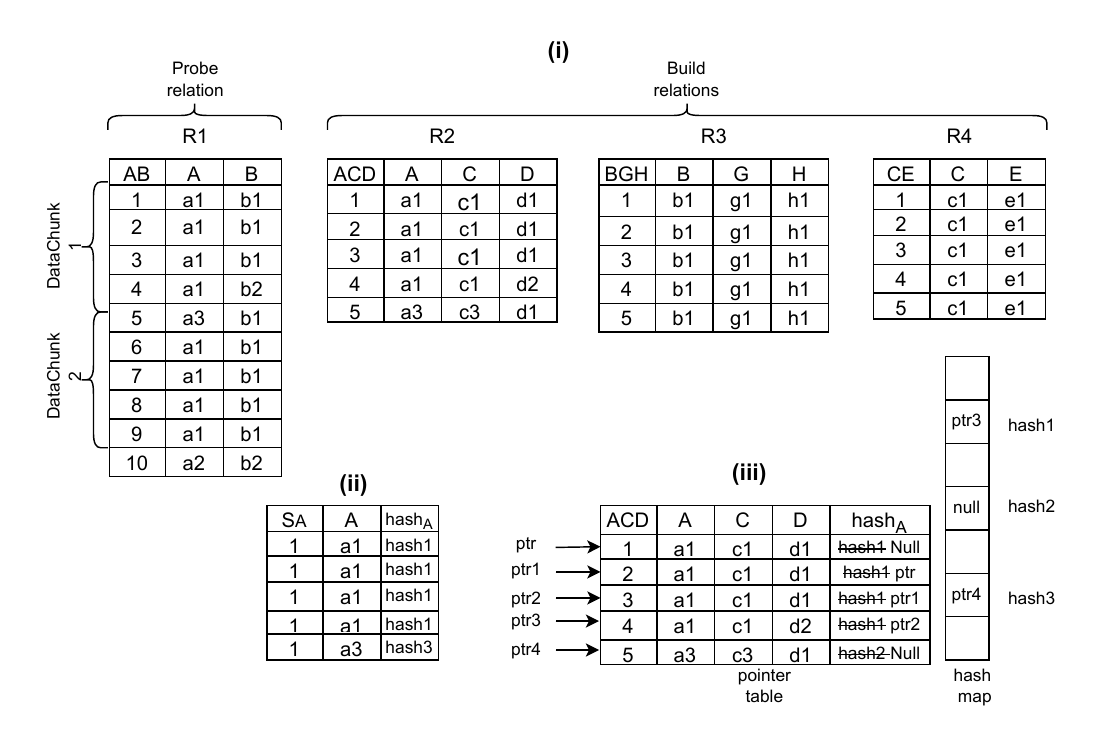}
  \caption{(i) Simpler running example for query execution with query plan: $R_1\bowtie R_2\bowtie R_4\bowtie R_3$; (ii) Probing structure for \nth{1} chunk of $R_1(AB,A,B)$; (iii) Build table for $R_2$ relation.}
    \label{fig:running example}
\end{figure}

\subsection{Data Structures}
\noindent\textbf{\textit{Running Example:}}
Figure \ref{fig:running example} shows a simpler example with 4 relations that we use to illustrate the data structures and the algorithm. We have 4 relations $R_1(AB,A,B)$ (driver), $R_2(ACD,A,C,D)$,\\ $R_3(BGH,B,G,H)$, and $R_4(CE,C,E)$, where we use $AB$, $ACD$, etc., to denote the ID columns. 
For illustration, we use the join order: $R_1(AB,A,B) \bowtie R_2(ACD,A,C,D) \bowtie R_4(CE,C,E) \bowtie R_3(BGH,B,G,H)$
Figure \ref{fig:Factorized representation} shows a COM chunk after the first join.

\sskip
\noindent
The main data structures in our engine are as follows. \\
\noindent\textbf{DataChunk:} A DataChunk contains a batch of tuples, either from a base relation or from an intermediate result, and corresponds to a vector of {\bf VectorColumns}. Each base relation is partitioned into multiple DataChunks. The base relation DataChunks as well as the intermediate DataChunks for STD contain VectorColumns of equal size, whereas the intermediate DataChunks for COM typically have
VectorColumns of different sizes. We assume the existence of an explicit ID column with unique values -- these are not necessary for the implementation, but make the exposition easier. Assuming an initial DataChunk size of 5 tuples, 2 DataChunks (as shown) are created for $R_1$ in our example. \\
\textbf{VectorColumn:} A VectorColumn typically corresponds to a relation column (i.e., attribute). In COM, some of the VectorColumns that correspond to ID attributes may also have selection vectors with one-to-one correspondence with the corresponding rows ($S_{AB}$ in Figure \ref{fig:Factorized representation}). The selection vector determines whether the corresponding values participate in a join operation.\\ 
\textbf{Count VectorColumn}: A count VectorColumn is added to the COM DataChunks after a join. For each value in the join column of the probe relation, there is a corresponding entry in the count VectorColumn (e.g., $C_{ACD}$), indicating how many matches were found in the probe relation (i.e., fanout for that value). 
For an entry in the Count VectorColumn, the corresponding value in another vector, called Prefix\_Sum\_Count (not shown in the figure), maintains the sum of all the counts of all the entries above it. This vector is essential for the result generation from the factorized representation. 

\sskip
\noindent\textbf{Hash Table:} A Hash Table consists of a pointer table and a hash map; an example is shown in Figure \ref{fig:running example}(iii) for $R_2$, with $A$ being the search key (probe attribute). 
The pointer table contains the tuples from the build relation ($R_2$), with an additional column used to create a chaining structure ($hash_A$ in the example). 
We can see such a chain for $A = a_1$ in the figure.
Say $a_1$ is mapped to location $hash1$ in the hashmap. That location stores a pointer to the {\em last} tuple seen during the build for that hash value, i.e., $ptr3$. The $hash_A$ column for $ptr3$ entry in the pointer table enables us to find $ptr2$, another tuple that also hashes to the same location, and subsequently $ptr1$ and $ptr$. $hash_A$ value for $ptr$ is set to Null, indicating the last entry in the chain.

Analogously a probing structure is built during the {\em probe} step, a DataChunk at a time (for $R_1$ as shown in Figure \ref{fig:running example}(ii)). The probing structure contains the join key VectorColumn(s) and the corresponding hash values.

\subsection{Hash Join Execution}
\noindent
Next, we discuss how a hash join is executed.

\sskip
\noindent\textbf{Building Hash Tables:} We process multiple input keys/tuples from the build relation simultaneously using vector lanes. 
After computing the hash values for the input keys, we scatter the keys and the corresponding pointers to store them in the hash map, that keeps track of the input keys and the associated chain of matching tuples.
Updating the hashmap and creating the chain-like structure that links all matching tuples with the same hash value is a sequential process, done in a tight for-loop to ensure efficiency and maintain the integrity of the hashmap.

\begin{figure}
    \centering
    \includegraphics[width=.45\textwidth,viewport=0 0 409 379.92,clip]{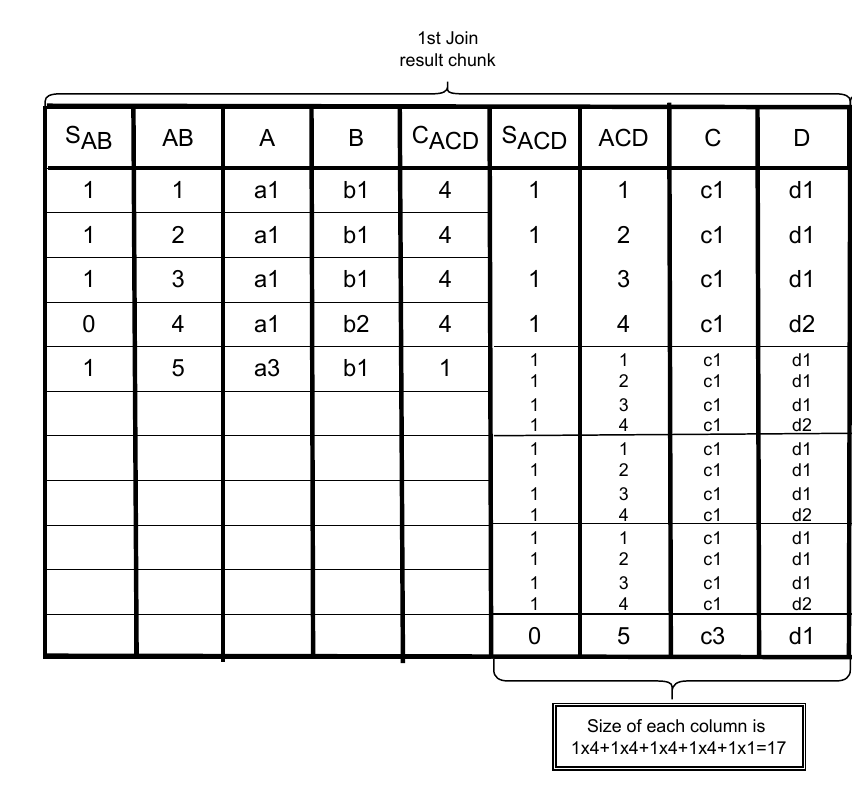}
    \caption{Partial factorized representation for query in Fig. \ref{fig:running example}}
    \label{fig:Factorized representation}
\end{figure}

\sskip
\noindent\textbf{Vectorized Linear Probing: }
Linear probing traverses the table linearly (i.e., follows the chains) until it reaches a $null$. The input keys in the probe relation DataChunk are 
processed using all the available SIMD vector lanes iteratively, and within each iteration, we 
use a nested loop to find all matches per input key. The longest chain 
becomes the bottleneck here, potentially causing under-utilization of vector lanes.
However, this enables us to capture the number of matches per input key (for Count VectorColumn) as we show next.

\sskip
\noindent\textbf{Hash Join Operator:}
Our hash join operator takes a DataChunk as an input and emits a DataChunk as the output. We probe into the hashmap of the build table and follow the pointers until we obtain all
the matching tuples for those values in the probing structure where selection vector is 1. For instance, the value of the selection vector is non-zero for the first tuple
of $R_1$'s first DataChunk. We start by following $ptr3$ which is stored in the $hash1$ location in $R_2$'s hash map. We determine whether the values in the columns match, and if they do, we increment the number of matches for the value in $R_1$. The count VectorColumn holds the value of this count. Additionally, the result chunk is augmented with the values from the payload columns, i.e., the remaining columns from the build table besides the join key column. For each match found for the join column entry, we add the corresponding entries for the payload columns. Accessing the $hash_A$ column of the tuple pointed by $ptr3$ allows us to go on to the next matching tuple pointer. If it is null, the probe has been completed. Now that we have $ptr2$ in $hash_A$ column of the tuple pointed by $ptr3$, we proceed in the same manner by checking the tuple referred by $ptr2$ to see if it matches the tuple of $R1$.  
This continues until a tuple with the $hash_A$ column value Null 
is encountered. The number of matches found is kept in the count column. As shown in Figure \ref{fig:Factorized representation}, the resulting chunk that is formed after the first join contains VectorColumns for each column contributed by both the probe and build relations, as well as the matching count column that has a one-to-one mapping with the entries from the probe relations.

The partial factorized result in Figure \ref{fig:Factorized representation} can be interpreted as follows. 
The last tuple in $R_1$ DataChunk found one match in $R_2$ ($C_{ACD} = 1$), while the first four tuples from that DataChunk found four matches each in $R_2$.
Thus the first four entries in columns $C$ and $D$ correspond to the first tuple from $R_1$ DataChunk, the next four entries correspond to the second tuple, and so on.
This result DataChunk is then probed against $R_4$'s build table. VectorColumns $Count CE$, $CE$, and $E$ are added to the resulting chunk after the join operation (not shown
  in the figure). A one-to-one correspondence exists between the join key $C$ VectorColumn and the $Count CE$ VectorColumn. We set the selection vector value to 0 for C values that failed to find matches in the join so that we can ignore them in subsequent joins. The updated resulting chunk is then probed against relation $R_3$. Following the join, the resultant chunk is expanded to include VectorColumns $Count BGH$, $BGH$, $G$, and $H$. 

\sskip
\noindent\textbf{Result Expansion:} After the final join has been performed, the result DataChunk may need to be expanded to produce the flattened tuples. We currently do this in a {\em depth-first} manner in our prototype since that is more memory-efficient.  
We begin with the driver relation columns in the chunk ($A$ and $B$ in the example). For each row (where those columns are present), we horizontally expand the factorized representation by maintaining a {\em row index vector}, that keeps track of the status of the expansion, and a {\em row result vector} which maintains the current partially expanded tuple. Considering the example in Figure \ref{fig:Factorized representation}, we begin with $(1, a_1, b_1)$. Since $SC_{ACD} = 1$ and $C_{ACD} = 4$, this tuple joins with 4 tuples from $R_2$, and will be expanded to $(1, a_1, b_1, 1, c_1, d_1)$, $(1, a_1, b_1, 2, c_1, d_1)$, $(1, a_1, b_1, 3, c_1, d_1)$, $(1, a_1, b_1, 4, c_1, d_2)$ one at a time. Next, the first of those tuples $(1, a_1, b_1, 1, c_1, d_1)$ will be expanded to 5 tuples (e.g., $(1, a_1, b_1, 1, c_1, d_1, 1, e_1)$). Finally, $(1, a_1, b_1, 1, c_1, d_1, 1, e_1)$ in turn will be expanded to 5 tuples, the first being  $(1, a_1, b_1, 1, c_1, d_1, 1, e_1, 1, g_1, h_1)$. Once the last column is added (or we encounter a selection vector entry of 0), we backtrack to the previous step and expand along the next tuple. The row index vector keeps track of how many of these expansions have been done at each step. Figure \ref{fig:Result expansion} illustrates this process.

\begin{figure}[h]
    \includegraphics[scale=0.49]{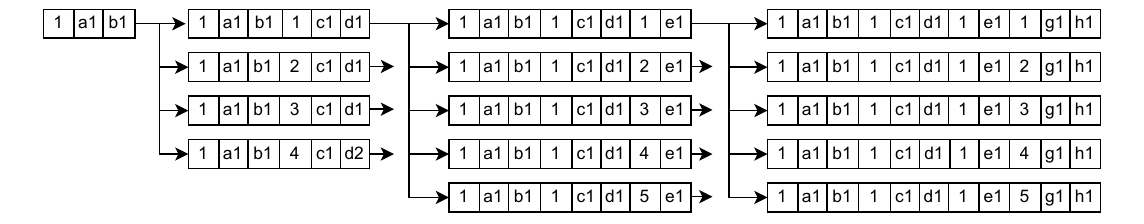}
    \caption{Final result expansion can be done in depth-first or breadth-first fashion}
    \label{fig:Result expansion}
\end{figure}

The result expansion can also be done in a {\em breadth-first} manner which is somewhat more parallelizable, but requires a separate sequential ``counting'' step to count the number of result tuples for each input tuple so that an appropriate number of copies can be made for each tuple. We are planning to implement and experiment with that approach in future work.

\sskip
\noindent\textbf{COM vs STD Implementations:}
Both COM and STD implementations are vectorized, and the build phase is identical. However, during the join phase, STD fully materializes the intermediate results
whereas COM maintains the intermediate DataChunks in a factorized representation. In either case, {\em selection vectors} are used to keep track of which tuples
contribute to the final result (and are set to False by any of the probe operations if there are no matches). Since the final DataChunk contains full result tuples, the
expansion phase is not needed for STD, but is needed for COM adding extra overhead at the end.

\subsection{Bitvector-based Early Pruning}
\label{sec:bitvector}
We integrated bitvector-based early pruning, following the approach in~\cite{bitvector_aware_query_proc}, 
to evaluate how well it achieves robustness alone. 
Each hash join operator creates a single
bitvector filter 
from the equi-join column on the build side as the keys of the bitvector filter.  
A bitvector filter is pushed down to the lowest possible level on the subtree rooted at the probe side to eliminate tuples from that subtree as quickly as possible. 
At each join operator, the matches are pruned by probing against all the corresponding bitvectors in a vectorized fashion. However, we might get false positives here because we only do a hash comparison of the values. In left-deep plan space, 
for the first operator only, bitvectors will be pushed down to both the probe side and the build side (so that the driver tuples can be pruned using the appropriate bitvectors before any probes). From the second operator onwards
in the plan tree, bitvectors are only pushed down to the build side since the probe side has already been pruned by all bitvectors that can be applied to it by a previous join operator.

\subsection{Semijoin-based Full Reduction}
\label{sec:fullreduction}
To compare with previous related work (e.g.,~\cite{Tziavelis2019OptimalAF,TziavelisGR21}) that achieves full reduction to remove all dangling tuples in the driver relation that don't contribute to final output, we integrated vectorized full
reduction into both STD and COM implementations. There is no extra hash table build cost since only parents need to probe into children in the join tree in bottom-up fashion to fully reduce the driver relation. Once all the dangling tuples of the driver relation are removed, the final join processing and result generation are done identically to STD or COM. 

\subsection{Comparisons to Prior Work}
We briefly compare our representation with two other prior implementations, namely FDB~\cite{FDB}, and GraphFlowDB~\cite{graphflowdb}.

FDB was the first work to systematically explore the use of compressed factorized representations for processing multi-way join queries, including cyclic queries~\cite{FDB,olt,danolt}. That work proposes two factorized representations, captured by the notion of {\em f-trees} and {\em d-trees}. Our representation can be seen as equivalent to an {\em f-representation}, corresponding to a {\em f-tree} rooted at the driver relation. One key difference is that, our representation works at the level of tuples rather than attributes. In other words, each node in an f-tree corresponds to an attribute; in our representation, each node (in essence) corresponds to a subset of attributes from a relation. 
More concretely, the representation corresponding to the first driver tuple in Figure \ref{fig:Factorized representation} can be written using the terminology from that work as:
\begin{align*}
\langle 1, a1, b1 \rangle &\times ((\langle 1, c1, d1 \rangle \times (\langle 1, e_1 \rangle \cup \langle 2, e_1 \rangle ...) \mbox{\ \hspace{2in}} \\
       &\mbox{\ \hspace{0.5in}}\cup  \langle 2, c1, d1 \rangle \times (\langle 1, e_1 \rangle \cup \langle 2, e_1 \rangle ...) \cup \cdots) \mbox{\ \hspace{2in}} \\
       &\times (\langle 1, g_1, h_1 \rangle \cup \langle 2, g_1, h_1 \rangle \cdots)
\end{align*}

The focus of that line of work, however, was primarily on the representation and compression issues, and they did not consider optimization and execution issues in as much depth, and do not
provide a mechanism to model the execution cost when using factorized representations. Further, their implementation uses a hierarchical data structure to store the base as well as intermediate relations, 
where each level corresponds to an attribute; the children of each node are grouped and maintained in a sorted order. Such a data structure is not easy to incorporate into a traditional or columnar-based storage or query engine. Our approach offers a more efficient and practical way to get those same benefits, especially given our use of vectorization. 

GraphflowDB~\cite{graphflowdb} is an in-memory property graph DBMS that does list-based execution while adopting a factorized structure for the intermediate relations. 
Although similar at a high level, our implementation differs from their in several important details.
First, their approach works off of an {\em adjacency list} representation of the data (equivalent to {\em join indexes}) where for each node (tuple), the IDs of its neighbors (i.e., the tuples it joins with) are maintained as a list. This does not
hold true for most relational storage engines, and would require an expensive pre-processing step (that in effect does all the pairwise joins). Second, although the data is passed between operators in chunks (vectors), the join itself sequentially loops through the tuples and propagates each tuple through the entire rest of the pipeline before moving on to the next tuple in the chunk, whereas our join operator is a standard
vectorized chunk based hash join operator that does morsel-driven execution, modified to work on the factorized representation.

\begin{figure}
\centering
    \includegraphics[scale=0.35]{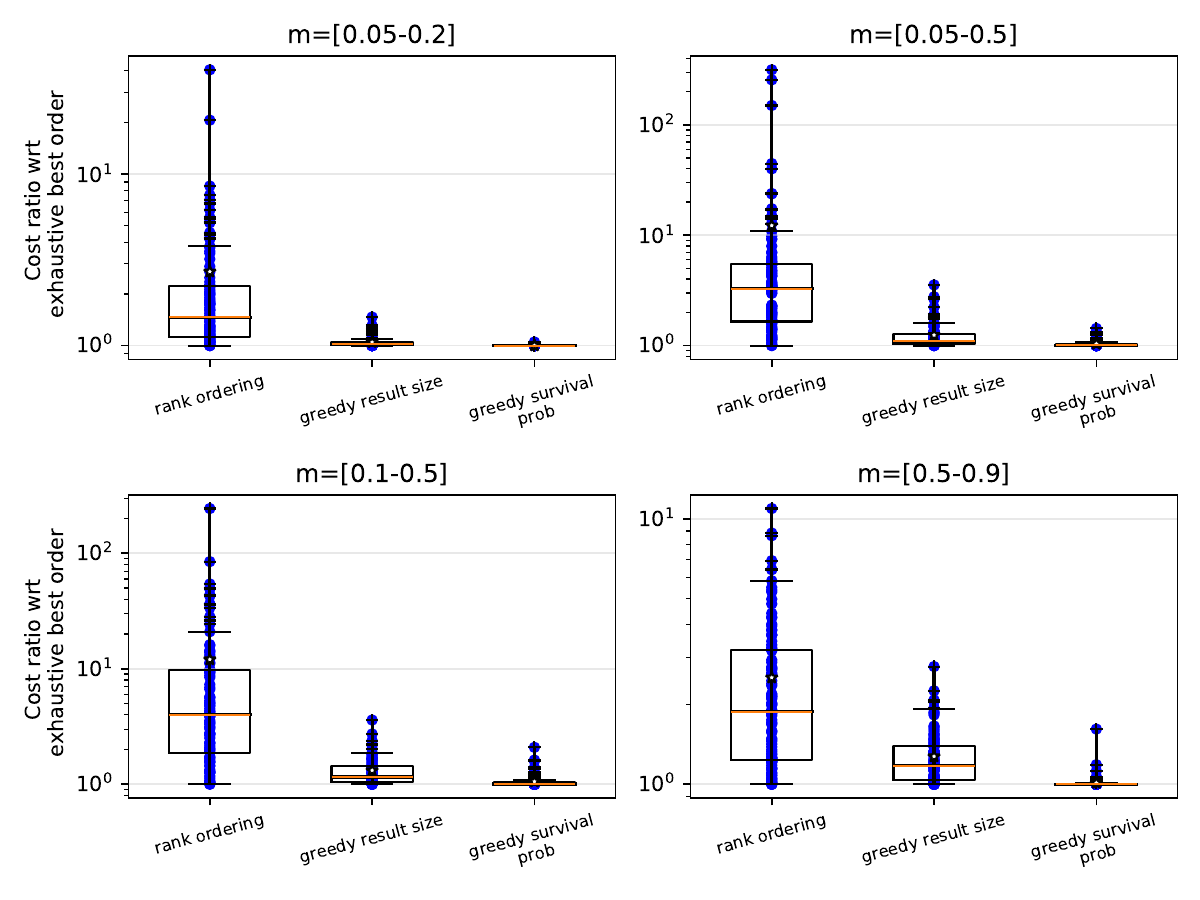}
    \caption{Comparing the join order optimization algorithms} 
    \label{fig:boxplot_all_opt}
\end{figure}

\section{Experimental Evaluation} \label{evaluation}
In this section, we present a comprehensive experimental evaluation that validates the claims made in this paper. 
We compare the six approaches outlined in Section~\ref{sec:overview architecture} to isolate the distinctive differences of each implementation and its impact to the robustness of the join ordering. 
The key results of our
evaluation are: (a) in presence of many-to-many joins, the standard rank ordering heuristic (which is very similar to what modern query optimizers do) 
finds orders-of-magnitude worse plans compared to our proposed greedy heuristics; (b) avoiding redundant probes through use of our approach sometimes results in orders of magnitude performance improvements, both in wall-clock time as well as in the number of probes; 
(c) bitvector-based early pruning or two-pass semijoin full reduction, by themselves, do not perform as well as COM, but often provide the overall best performance when combined with COM, 
(d) our cost model tracks real execution costs and can be used for making optimization decisions among the competing approaches. We also performed
an extensive evaluation of the robustness of these techniques to the choice of the join order, using both simulations and our prototype,  but omit those due to space constraints.

We implemented our prototype of the vectorized query execution engine in C++ 
closely following the standard practices. 
For all experiments, the initial chunk size was set to 2048.
Main evaluation metrics we focus on are: (a) the CPU wall clock time, and (b) the number of probes into the hash tables. As discussed earlier, latter may be a more important metric to focus on in many situations. We use a machine with 20 CPU cores and 32GB RAM for all experiments. 
\begin{figure}
\centering
    \includegraphics[width=0.5\textwidth]{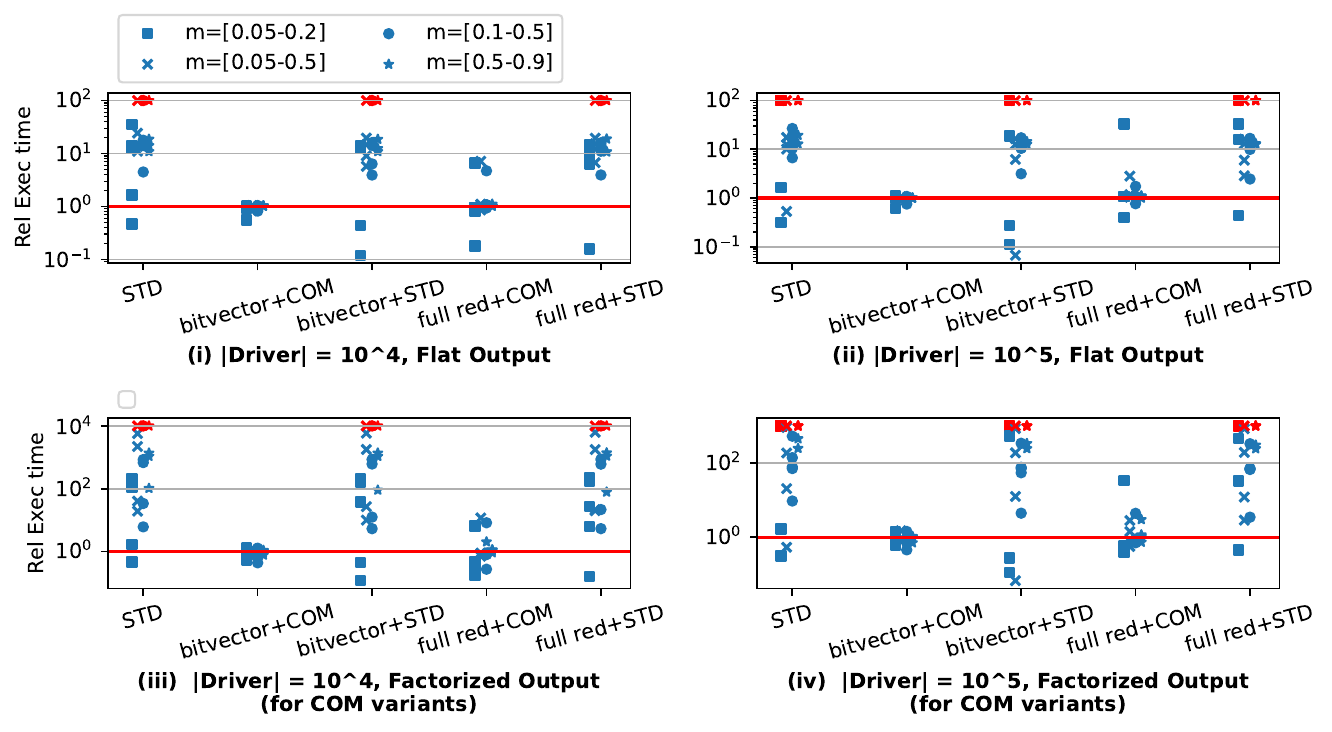}
    \caption{Relative Runtime comparison for 5 different approaches w.r.t. COM for synthetic data set} 
    \label{fig:synthetic-2}
\end{figure}

\subsection{Join Order Optimization}
We begin with comparing the three greedy heuristics discussed in Section~\ref{costmodel} for join order optimization with the exhaustive algorithm. We randomly generate 
join trees with up to 20 nodes, and compare the 
estimated execution costs of the plans found by the four algorithms. The fanouts ($fo$s) were randomly chosen between $[1-10]$ and match probabilities were selected uniformly at random from four sets of ranges: $[0.05-0.2], [0.05-0.5], [0.1-0.5]$, and $[0.5-0.9]$. We generate 100 sample join trees for each match probability range. For each sample join tree, the number of children for the root node could vary from $[2-5]$; for the other nodes, it is $[0-3]$. We ran a number of other experiments, with different settings, and the results were consistent. Figure \ref{fig:boxplot_all_opt} shows the results as a box plot. We show the ratio cost of the best order determined by each heuristic with respect to the cost of the best join order determined by the exhaustive algorithm. 
We see that across all match probability ranges, the survival probability-based heuristic is closest to the optimal algorithm compared to the other two approaches. The rank ordering heuristic in particular performs the worst, sometimes by orders of magnitude. We note that the execution cost is determined in all instances under the assumption that redundant probes are avoided (i.e., using the cost model we developed). This clearly illustrates the need to properly account for those during query optimization; the rank ordering algorithm that uses a single number ($s_i$) to model a join operator does not suffice even as a heuristic.

\subsection{Synthetic Benchmark} 
To properly evaluate the different approaches under a varying set of parameters, we create a synthetic benchmark modeled after the Star Schema Benchmark (SSBM)~\cite{ssbm} that contains a  number of synthetically
generated datasets and a mixture of different types of queries, including star queries, path queries, and snowflake queries (SSBM only contains star queries, which form a trivial special case for these approaches, especially from the
query optimization perspective).
We show experimental results for 4 query shapes: (a) a {\em 7-relation star} query, (b) a {\em 11-relation path query} (with the center relation as the driver), (c) a {\em 3-2 snowflake query}, where the driver (center) relation has 3 children each of which has 2 children in turn,
and (d) a {\em 5-1 snowflake query}, where the driver relation has 5 children, each of which has 1 child. These query shapes capture the spectrum of query shapes that we typically see in practice.

For each dataset, the match probabilities were chosen uniformly at random from one of the following ranges: $[0.05-0.2], [0.05-0.5], [0.1-0.5]$, and $[0.5-0.9]$, and the fanouts 
were chosen uniformly at random between $[1-10]$. We experiment with 3 different driver relation sizes: $10^4$, $10^5$, and $10^6$,
and with 16 queries covering different match probability ranges and query types.  

Figure \ref{fig:synthetic-2} shows the results of the 5 approaches (excluding COM) normalized using the execution cost of COM (shown as the red horizontal baseline). The queries which timed out are shown as red data points at the top of the plot (all for STD and its variants). 
The top 2 plots show the relative execution times when the output is in flattened format for all approaches, whereas the bottom plots show the scenario when the output is in factorized format for COM variants. 
The best survival probability-based ordering is chosen as the default order for all queries. 

The results conclusively demonstrate the need for avoiding redundant probes through use of factorized representation, even if the final output is in flattened form. We see that for almost all queries, COM variants outperform STD variants, by orders of magnitude in many cases (we note again that a number of queries timed out for STD).
We also see that bitvector or full reduction, by themselves, are not competitive with COM, highlighting the need to use COM in addition to those approaches. 
In a small fraction of cases, we see STD performing better than COM. This usually happens when the match probabilities are very low for several of the joins, resulting in a huge reduction in the number of tuples quickly. In these cases, the overheads of COM dominate the overall execution time.
Our cost model, through use of match probabilities and fanouts, can be used to identify such cases easily.

Finally, among the three COM variants, we see that COM and COM+BVP perform very similarly, with the latter showing slightly superior performance. Full reduction
using semijoins has a higher variance, and sometimes performs orders-of-magnitude worse, especially for path or snowflake queries. The primary reason for this is
usually the presence of a highly selective join condition. In standard left-deep plans, the optimal plan may perform that join first to significantly reduce the 
intermediate result size early on. However, the benefits of such a highly-selective join are lower in case of SJ because many of the semi-joins are done independently
of each other. Once again, our optimization algorithms can be used to identify such situations and choose the right technique for a given query.

\begin{figure}
    \centering
    \begin{subfigure}{\linewidth}
        \centering
        \includegraphics[scale=0.45]{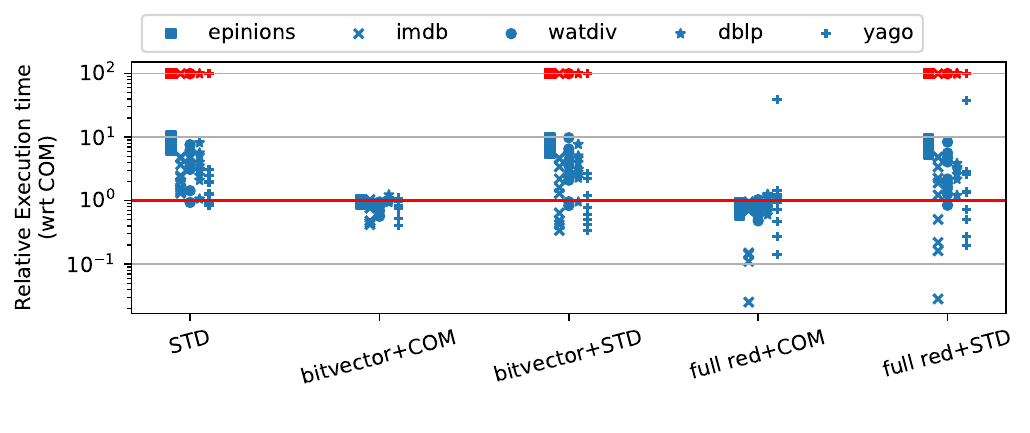}
        \caption{Flat output format}
        \label{fig:ce-benchmark}
    \end{subfigure}
    \begin{subfigure}{\linewidth}
        \centering
        \includegraphics[scale=0.45]{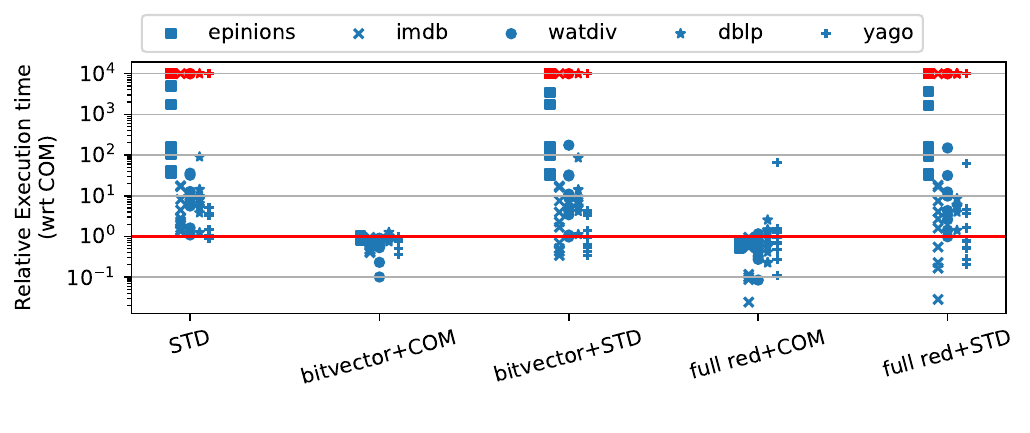}
        \caption{Factorized output format (for COM variants)}
        \label{fig:ce-benchmark-excluding-result-gen}
    \end{subfigure}
    \caption{Relative runtime comparison of the five approaches vs COM for CE benchmark}
    \label{fig:ce-combined}
\end{figure}

\subsection{CE Benchmark}
Next, we do a similar analysis using the CE benchmark~\cite{chen2022accurate, diamond}\footnote{\url{https://github.com/cetechreport/CEExperiments}}, which contains a wide range of complex queries that exhibit intermediate result size explosion due to many-to-many joins. 
We use 5 datasets from the CE benchmark: {\em epinions}, {\em imdb}, {\em watdiv}, {\em dblp}, and {\em yago}. We filtered the queries such that the result sizes don't exceed $10^{10}$. 
From each dataset, we selected 10 queries randomly. Figure \ref{fig:ce-benchmark} shows the relative execution times with respect to COM. As above, COM baseline is shown as the horizontal red line, and the data points which timed out are shown in red color. 

Overall, the results from these experiments are very similar to the ones above. For some queries in {\em imdb} and {\em yago}, COM overheads dominate because the fanouts are low. However, for almost all the queries, COM variants outperform STD variants, sometimes by orders of magnitude.
COM approaches are also more memory-efficient in general. Here we see COM, COM+BVP and COM+SJ performing very similarly, with the latter somewhat better (but also showing
higher variance). 

These results, as above, again highlight the need to accurately compute the costs of different plans that combine the different techniques in various ways, using a formal cost model such as ours, to pick the best plan for a given query and dataset.

\subsection{Simulation Analysis}
Our cost formulas can be used to analytically compare the different approaches under a variety of settings. We illustrate
one such analysis here, where we compared the performance of the techniques when all the relations are the same size and
all the match probabilities and fanouts are identical. This is an idealized setting that is unlikely to be seen in practice,
but helps in understanding the relative strengths and weaknesses of the techniques. 
Figure \ref{fig:best-cost-6-approaches} shows the results of this experiment. For the four query shapes (as before),
we plot the total running time of the 5 approaches as the match probability $m$ is varied from 0.1 to 0.9, for two
values of fanouts, 2 and 5 (we omit STD by itself because its high execution costs distort the plots). 

For BVP and SJ approaches, we model the cost of a semi-join probe or a bloomfilter-probe as 1/2 that of a normal probe (i.e., 
if a plan made 1000 bloomfilter probes, we count it as 500 probes). These ``weight parameters'' depend on the computational
environment and can be estimated (as we did) through micro-benchmarking; i.e., for a set of queries, we measured the costs
of the different types of probes to find the appropriate value of this parameter (this parameter depends on the sizes of the
hash tables and bloomfilters, but overall, the optimization algorithms are not highly sensitive to it). We similarly 
measured the cost of result generation, and model the cost of a single tuple generation as 1/14th of the cost of a hash join
probe. This may seem quite significant, but is a result of our highly efficient, vectorized, implementation of the hash join.
For any of the COM variants, the final result size is used to compute the cost of the expansion phase (by multiplying that
        by 1/14).

\begin{figure}
    \centering
    \includegraphics[width=1\linewidth]{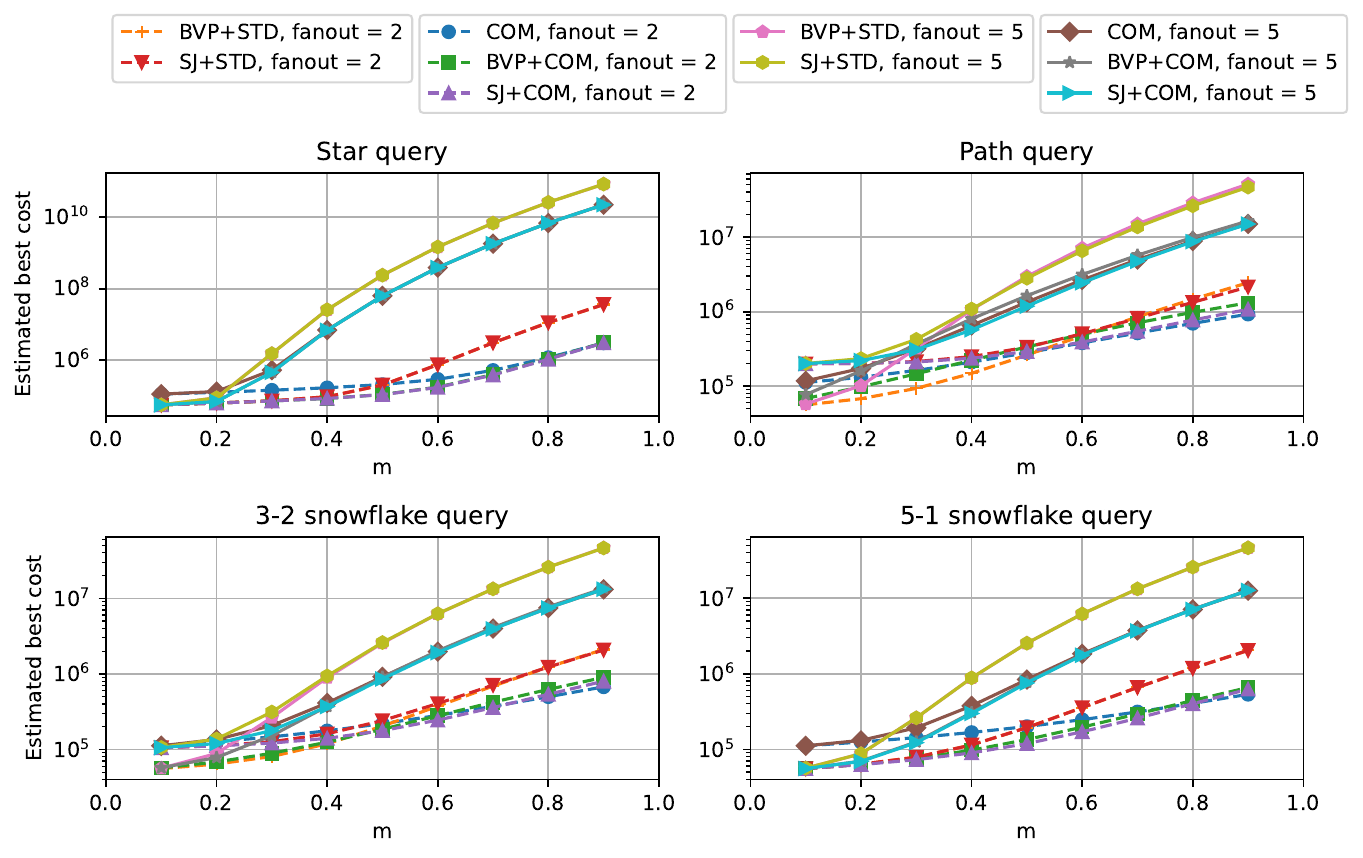}
    \caption{Simulation Analysis}
    \label{fig:best-cost-6-approaches}
\end{figure}

We see that irrespective of the fanout value, STD variants are competitive with COM when the match probabilities are low; however, the gap between them 
increases rapidly as the match probabilities increase, especially when the fanouts are also high. The relative performance of the COM variants shows several
interesting phenomena. At the low match probabilities, COM+BVP performs the best since it is able to eliminate the tuples fast using the bloomfilters. 
We see this especially with path queries where BVP+STD performs the best, but we see this with the snowflake queries as well to some extent.

As the match probabilities increase, the bloomfilters don't help much with eliminating tuples and the overheads of those start showing up. At the highest
match probabilities, we see that COM performs the best by itself, as it avoids the cost of bloomfilter or semijoin probes, which are useless in that setting.

Overall this analysis illustrates the benefits of our formal framework in better understanding the different algorithms under different settings.

\subsection{Cost Model Validation}
Finally, we address the question of how well our cost model tracks the actual execution costs that we see. We use five queries from the synthetic benchmark of different shapes, so that we can control 
the match probabilities and fanouts, and ensure the independence and uniformity properties. Figure \ref{fig:cost-model} shows a scatter plot of the predicted costs (in terms of 
the expected number of probes per driver tuple) vs the actual execution cost (in seconds), for 300 randomly chosen join orders for each query, for $10^5$ driver tuples. 
We use the same weight parameters as above (1/2 and 1/14) to find the cost of a plan in terms of the number of hash join probes.
As we can see, 
 the predicted costs align very well with the actual execution times across different query shapes, further validating the use of the cost model for query optimization.
\begin{figure}
    \centering
    \includegraphics[width=1\linewidth]{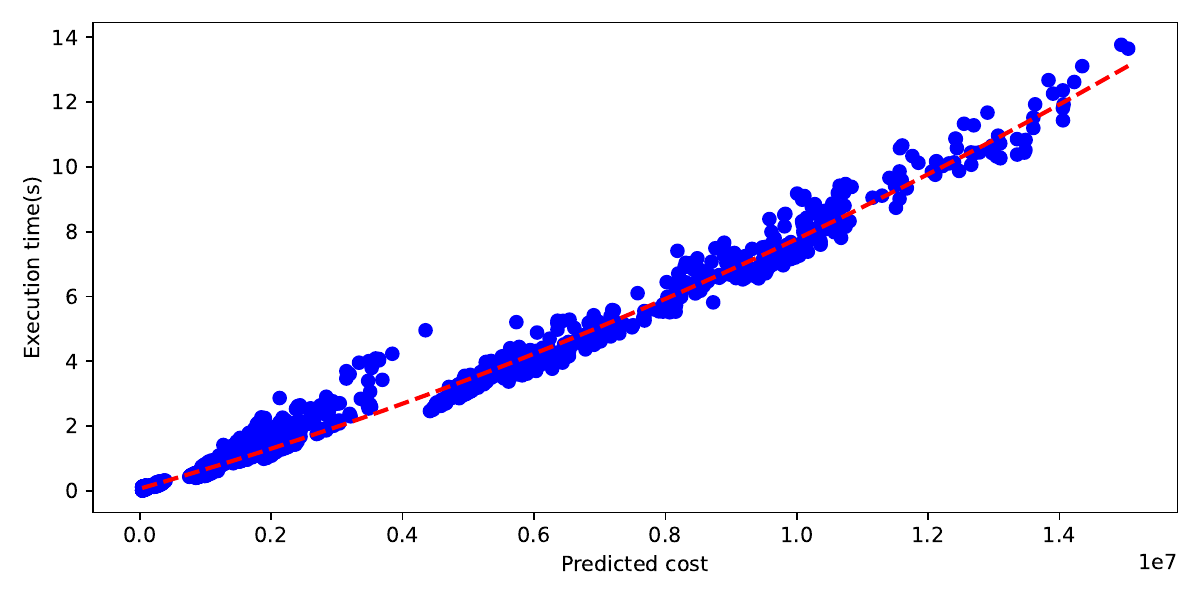}
    \caption{Predicted costs track the actual execution times}
    \label{fig:cost-model}
\end{figure}

\subsection{Impact of Constant Fanout Assumption} 

\begin{figure}
\centering
    \includegraphics[width=0.5\textwidth]{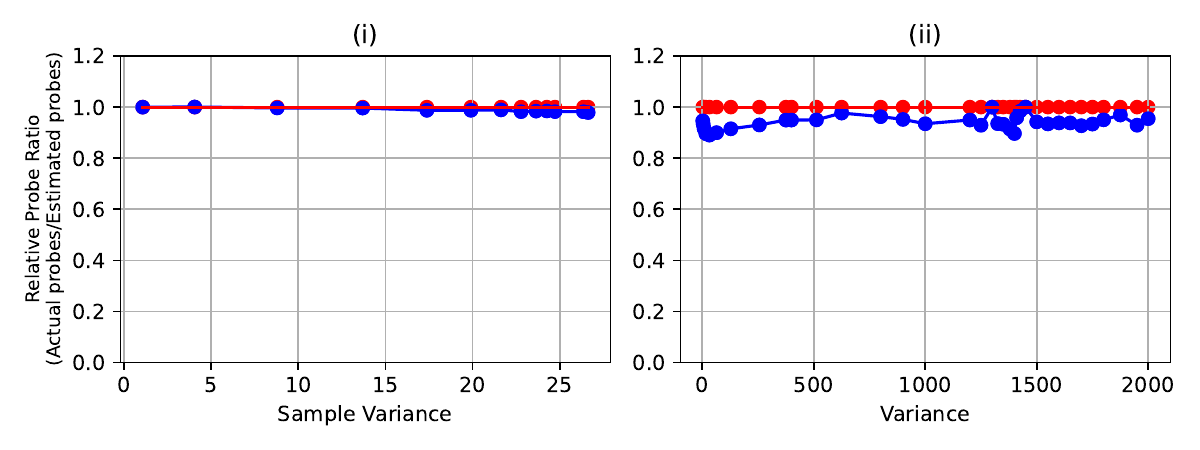}
    \vspace{-15pt}
    \caption{Evaluating the impact of varying fanouts} 
    \vspace{-15pt}
    \label{fig:dist_both}
\end{figure}

For this experiment, we consider the impact of the constant fanout assumption on the estimates. For the 3-2 snowflake query, we generated
a range of datasets with varying skew in the fanout distribution, so that in a single relation, the fanout for a join varied 
across the tuples, distributed either as a {\em truncated} normal distribution ($fo \sim \mathcal{N}(\mu=10, \sigma^2), 1 \le fo \le 2 \mu - 1$) or an exponetial 
distribution. In Figure~\ref{fig:dist_both}, we plot the the estimated as well as the actual number of
probes observed, using the estimated value to normalize, as the population variance increases. 
For the exponential
distribution, the average fanout values were as high as 45.72 for the largest value of variance, indicating a highly skewed
distribution. On the other hand, for the highest value of variance, the normal distribution is almost uniform in the range of
$1$ to $19$. As we can see, 
the estimated number of probes closely match the actual number of probes observed, even when the variance is very high,
demonstrating that our cost model is not very sensitive to the constant fanout assumption.

\subsection{Robustness Evaluation}
Finally, we show results of an experiment to evaluate the robustness of the six approaches, using a varied set of queries in the synthetic and the CE benchmarks. 
For each query, we choose 10 join orders uniformly at random (with the driver relation fixed), and plot the execution times as a box-plot (Figure \ref{fig:robustness}). 
Due to space constraints, we only show this for a subset of the queries, but results were consistent across other datasets and query shapes. 

In the boxplots, we normalize the execution cost for each join order with the largest cost that we see among all the join orders for that method. Hence, this box-plot shows
relative robustness for each method, not absolute robustness.
Across all of the experiments, we consistently see the benefits of using COM for achieving robustness, even for the bitvector-based approach or two-pass full reduction. 
Generally speaking, use of BVP or SJ optimizations achieves higher robustness, and the combination of COM+SJ shows almost no
variations across the join orders (assuming the driver relation is fixed). This is in line with our earlier observation (Theorem \ref{thm:full-reduction}). 

\begin{figure}
    \centering
    \begin{subfigure}{\linewidth}
        \centering
        \includegraphics[scale=0.5]{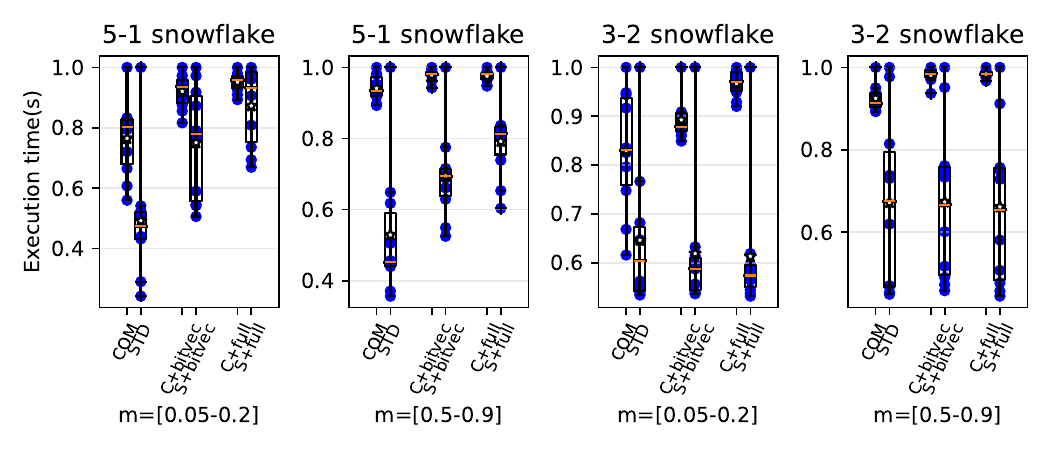}
        \caption{Synthetic Benchmark}
    \end{subfigure}
    \begin{subfigure}{\linewidth}
        \centering
        \includegraphics[scale=0.5]{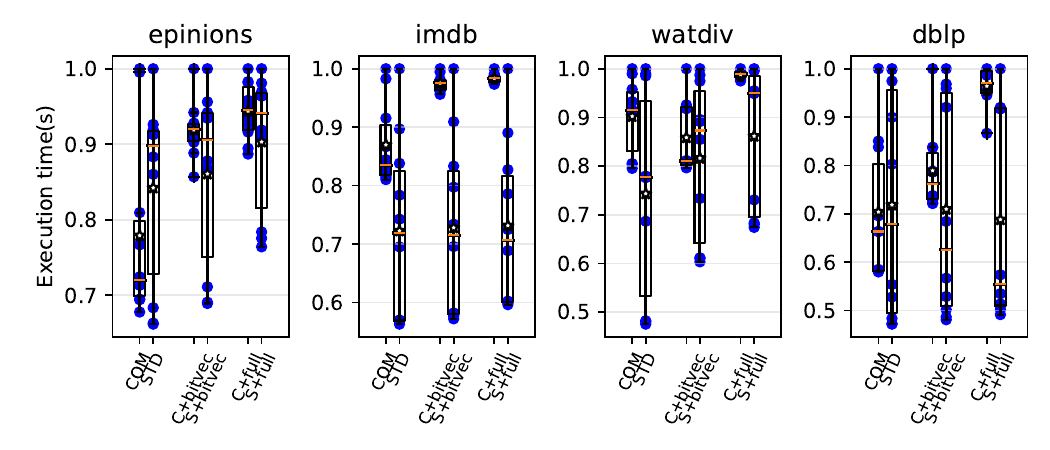}
        \caption{CE Benchmark}
    \end{subfigure}
    \caption{Robustness Evaluation}
        \vspace{0.10in}
    \label{fig:robustness}
\end{figure}

\vspace{10pt}
\section{Related Work} \label{related work}
Here, we cover additional closely related work beyond what was discussed in previous sections.

Because of decades of research, binary joins are utilized in the vast majority of RDBMSs given their simplicity and flexibility to handle a variety of
workloads~\cite{AdoptingWcojs}. 
However, their performance can be sub-optimal, as intermediate results from binary joins often grow larger than the final query result~\cite{hung, kemper, Graefe, eddies_joe,
AdoptingWcojs}. Early techniques like Hash Teams and Eddies addressed this by processing multiple input relations concurrently in a single multi-way join~\cite{Graefe, kemper,
eddies_joe}. But, since they still rely on binary joins, they don't fully mitigate the growth of intermediate results~\cite{AdoptingWcojs}, and cyclic queries can still lead to an
explosion. The recent line of work on worst-case optimal joins presents a new approach to handling cyclic queries through use of new multi-way join algorithms that execute the
join attribute-at-a-time~\cite{hung,Veldhuizen2012LeapfrogTA}. Several recent works have considered combining binary joins and worst-case optimal joins in a single execution
engine~\cite{AdoptingWcojs,remywang}. As with much prior work on query optimization, we restricted our analysis to acyclic queries in this work. In a query engine that continues
to use binary joins, one way to apply our techniques is by following the standard practice of choosing a spanning tree of the join graph (in effect, by ignoring some of the joins during optimization), which will allow us to use the more efficient query optimization heuristics. Another option is to generalize the cost model to handle cyclicity and use the exhaustive algorithm for optimization. For query engines that use WCOJs, there are additional optimization questions (e.g., when to use a WCOJ) that interact with the join order optimization questions. We plan to explore some of these questions in our future work.

With increased prevalence of graph datasets and the need to support analytics over them, subgraph pattern queries are becoming prevalent~\cite{sahu}. 
Since highly connected graph data contains a large number of many-to-many relationships, many-to-many joins are common in such workloads. An explosion in the size of intermediate and output results occurs for even a small number of input tuples when executed with traditional binary join algorithms. Also, long acyclic queries are common in the context of path-finding queries~\cite{lsqb, ldbc}. This has led to much work on multi-way join queries in the context of graph databases~\cite{graphflowdb,kuzu:cidr,amine_continuous}. We are, however, not aware of any work that has considered the query optimization and cost modeling issues that we considered in this paper.

Factorization~\cite{FDB,olt,danolt,wylezo2012cost,zavodny2014a,petrou2015single,graphflowdb} uses 
a combination of vertical (product) and horizontal (union) data partitioning to reduce redundancy in the data while boosting query performance. This method is particularly
effective with schema and queries where all cross products of the answer tuples' projections appear as answer tuples~\cite{AbulBasher2020AnswerGF}. 
However, as we noted earlier, the specific implementations proposed in the prior work are impractical 
for integration into traditional pipelined, vectorized DBMS processors since they use row-based arrays sorted in lexicographic order according to the join order, or rely on representing
input relations as sorted tries, and process and output tries~\cite{kuzu:cidr,graphflowdb}.  
Later work~\cite{wylezo2012cost} on factorized databases considered some cost-based optimization issues, but their focus is on manipulating the factorized
representations within a limited scope, and it is not clear how to apply those techniques to optimizing queries in relational engines. 
A recent system, Answer Graph~\cite{AbulBasher2020AnswerGF}, extends the PostgreSQL query processor to include a join-only subset of SPARQL and conducts a two-step
evaluation of acyclic queries. Similar to the Yannakakis approach, 
the first stage involves a full semi-join reduction, through a series 
of ``forward edge extensions'', followed by cascade deletes called ``node burnbacks''. 
Similarly, Tziavelis et al.~\cite{Tziavelis2019OptimalAF,TziavelisGR21} present techniques for ranked enumeration over conjunctive queries that perform a two-pass full semijoin
reduction to construct a compressed representation in memory for more efficient enumeration. Their work focuses mainly on the optimization problems arising from computing the
outputs in ranked order. 
Another very recent work~\cite{gross2024dynamically} proposes an approach to integrate factorization into DuckDB.
None of these works consider the issues of cost model or join order optimization systematically, and the former two are also not designed to be implemented within a
standard columnar query engine. As discussed earlier, Zhu et al.~\cite{lip} and Ding et al.~\cite{bitvector_aware_query_proc} do consider some of the query 
optimization issues in the context of bitvector-based approach, but do not provide a comprehensive treatment of all different approaches. Finally, a recent work~\cite{diamond} presents an approach that treats the placement of semi-joins or bloomfilters in a 
query plan as an optimization problem, but does not develop complete cost models or optimization algorithms for solving it. Our framework can be naturally generalized to handle 
their plan space.

\vspace{10pt}
\section{Conclusion and Future Work}
In this paper, we formalized and analyzed a core query optimization problem in the context of many-to-many joins. Although a number of different techniques have been proposed to
handle such growing or exploding joins, the query optimization issues have been largely ignored so far. We presented a cost model that more accurately captures the cost of a join
order when using a factorized representation for intermediate results, and presented several optimization algorithms to find the join order given a query and the relevant
statistics. 
Our experimental evaluation, using a modern vectorized implementation of these techniques, clearly demonstrates the need for the more accurate cost modeling, as well as the effectiveness of our techniques. There are number of natural directions for future
work that we are planning to explore, including generalizing to handle cyclic queries; incorporation of WCOJs and deciding when to use them; 
and better selectivity estimation techniques.

\bibliographystyle{ACM-Reference-Format}
\bibliography{sample-base}

\newpage
\appendix

\end{document}